\crefname{figure}{Figure}{Figures} 
\let\PP\undefined 
\let\EE\undefined
\algrenewcommand\algorithmiccomment[1]{\hfill $\triangleright$ #1} 
\pgfplotsset{compat=newest}
\definecolor{ShadowGrouping}{HTML}{BA2D0B}
\definecolor{Adaptive}{HTML}{0A8754}
\definecolor{Derand}{HTML}{C08497}
\definecolor{AEQUO}{HTML}{EFC88B} 
\definecolor{Random}{HTML}{01110A}
\definecolor{Overlapped}{HTML}{1098F7}
\definecolor{Guarantee}{RGB}{102,102,102}
  \renewcommand*{\AC@hyperlink}[2]{%
    \begingroup
      \hypersetup{hidelinks}%
      \hyperlink{#1}{#2}%
    \endgroup
  }%
\newtheorem{theorem}{Theorem}
\newtheorem{corollary}[theorem]{Corollary}
\newtheorem{lemma}[theorem]{Lemma}
\newtheorem{proposition}[theorem]{Proposition}
\newtheorem{problem}[theorem]{Problem}
\newtheorem{definition}[theorem]{Definition}
\newtheorem{remark}[theorem]{Remark}
\DeclareMathOperator{\LandauO}{\mathrm{O}}
\DeclareMathOperator{\Tr}{Tr}
\DeclareMathOperator{\supp}{supp}
\newcommand{\fro}{\mathrm{F}}
\DeclareMathOperator*{\argmin}{arg\min}
\DeclareMathOperator*{\argmax}{arg\max}
\DeclareMathOperator{\sign}{sign}
\DeclareMathOperator{\Var}{Var}
\DeclareMathOperator{\wt}{\verbat{weight}}
\newcommand{\chemicalaccuracy}{\epsilon_\mathrm{acc.}^\mathrm{chem.}}
\newcommand{\Nchemicalaccuracy}{N_{\mathrm{acc.}}^{\mathrm{chem.}}}
\NewDocumentCommand\Cl{mg}{
    \ensuremath{\mathrm{Cl}_{#1}\IfNoValueTF{#2}{}{(#2)}}%
}
\NewDocumentCommand\HW{mg}{
    \ensuremath{\mathrm{HW}_{#1}\IfNoValueTF{#2}{}{(#2)}}%
}
\newcommand{\CC}{\mathbb{C}}
\newcommand{\RR}{\mathbb{R}}
\newcommand{\NN}{\mathbb{N}}
\newcommand{\1}{\mathds{1}}
\newcommand{\EE}{\mathbb{E}}
\newcommand{\PP}{\mathbb{P}}
\newcommand{\mc}[1]{\mathcal{#1}}
\newcommand{\argdot}{{\,\cdot\,}}
\renewcommand{\vec}[1]{\boldsymbol{#1}}
\newcommand{\ad}{\dagger}
\DeclarePairedDelimiterX{\abs}[1]{\lvert}{\rvert}{%
  \ifblank{#1}{\,\cdot\,}{#1}
}   
\DeclarePairedDelimiterX\norm[1]\lVert\rVert{%
  \ifblank{#1}{\,\cdot\,}{#1}
}   
\newcommand{\lpnorm}[2][p]{\norm{#2}_{\ell_{#1}}}   
\newcommand{\lpnormB}[2][p]{\norm[\Big]{#2}_{\ell_{#1}}}
\newcommand{\pnorm}[2][p]{\norm{#2}_{#1}} 
\newcommand{\Bnorm}[1]{\pnorm[B]{#1}}
\DeclarePairedDelimiterX{\iiiNorm}[1]{\lvert}{\rvert}{%
  \delimsize\lvert\delimsize\lvert#1\delimsize\rvert\delimsize\rvert%
}
\DeclarePairedDelimiterXPP\snorm[1]{}\lVert\rVert{_\infty}{\ifblank{#1}{\,\cdot\,}{#1}}   
\DeclarePairedDelimiterXPP\twonorm[1]{}\lVert\rVert{_2}{\ifblank{#1}{\,\cdot\,}{#1}}   
\DeclarePairedDelimiterXPP\trnorm[1]{}\lVert\rVert{_1}{\ifblank{#1}{\,\cdot\,}{#1}}   
\DeclarePairedDelimiterXPP\fnorm[1]{}\lVert\rVert{_{\fro}}{\ifblank{#1}{\,\cdot\,}{#1}}   
\DeclarePairedDelimiterXPP\dnorm[1]{}\lVert\rVert{_\diamond}{\ifblank{#1}{\,\cdot\,}{#1}}   
\DeclarePairedDelimiterXPP\cbnorm[1]{}\lVert\rVert{_\mathrm{cb}}{\ifblank{#1}{\,\cdot\,}{#1}}   
\DeclarePairedDelimiterXPP\onenorm[1]{}\lVert\rVert{_{1\rightarrow 1}}{\ifblank{#1}{\,\cdot\,}{#1}}   
\DeclarePairedDelimiterXPP\ddnorm[1]{}\lVert\rVert{_{\diamond\rightarrow \diamond}}{\ifblank{#1}{\,\cdot\,}{#1}}   
\DeclarePairedDelimiterXPP\ssnorm[1]{}\lVert\rVert{_{\infty\rightarrow\infty}}{\ifblank{#1}{\,\cdot\,}{#1}}   
\DeclarePairedDelimiterX\Set[1]\{\}{%
  
  #1
}
\DeclarePairedDelimiterX\innerp[2]{\langle}{\rangle}{%
  \ifblank{#1}{\,\cdot\,}{#1} , \ifblank{#2}{\,\cdot\,}{#2}%
}
\DeclarePairedDelimiterX\average[1]{\langle}{\rangle}{%
  \ifblank{#1}{\,\cdot\,}{#1}%
}
\DeclarePairedDelimiter{\bra}{\langle}{\vert}
\DeclarePairedDelimiter{\ket}{\vert}{\rangle}
\DeclarePairedDelimiterX\braket[2]{\langle}{\rangle}%
  {#1\kern0.15ex\delimsize\vert\kern0.15ex\mathopen{}#2}
\DeclarePairedDelimiterX\ketbra[2]{\vert}{\vert}%
  {#1\kern0.15ex\delimsize\rangle\delimsize\langle\kern0.15ex\mathopen{}#2}
\DeclarePairedDelimiterX\sandwich[3]{\langle}{\rangle}%
  {#1\,\delimsize\vert\kern0.15ex\mathopen{}#2\kern0.15ex\delimsize\vert\kern0.15ex\mathopen{}#3}
\DeclarePairedDelimiterX\obraket[2]{(}{)}%
  {#1\kern0.15ex\delimsize\vert\kern0.15ex\mathopen{}#2}
\DeclarePairedDelimiterX\oketbra[2]{\vert}{\vert}%
  {#1\kern0.15ex\delimsize)\delimsize(\kern0.15ex\mathopen{}#2}
\DeclarePairedDelimiterX\osandwich[3]{(}{)}%
  {#1\,\delimsize\vert\kern0.15ex\mathopen{}#2\kern0.15ex\delimsize\vert\kern0.15ex\mathopen{}#3}
\renewcommand{\(}{\left(}
\renewcommand{\)}{\right)}
\newcommand{\expectation}{\mathds{E}}
\newcommand{\expectationover}[1]{\underset{#1}\expectation}
\newcommand{\verbat}[1]{\text{\normalfont{\ttfamily{#1}}}}
\newcommand{\supplement}{Supplementary Information}
\newcommand{\Scref}[1]{Supplementary~\cref{#1}}
\newcommand{\hhu}{Faculty of Mathematics and Natural Sciences,
	Heinrich Heine University D{\"u}sseldorf, 
	Germany
}
\newcommand{\tuhh}{%
    Institute for Quantum Inspired and Quantum Optimization,
    Hamburg University of Technology, Germany
}
\begin{document}
\title{Guaranteed efficient energy estimation of quantum many-body Hamiltonians 
\texorpdfstring{\protect\\}{}
using ShadowGrouping}

\author{Alexander Gresch}
\email{alexander.gresch@hhu.de}
\affiliation{\hhu}
\affiliation{\tuhh}
\author{Martin Kliesch}
\email{martin.kliesch@tuhh.de}
\affiliation{\tuhh}

\begin{abstract}
Estimation of the energy of quantum many-body systems is a paradigmatic task in various research fields. In particular, efficient energy estimation may be crucial in achieving a quantum advantage for a practically relevant problem. For instance, the measurement effort poses a critical bottleneck for variational quantum algorithms.

We aim to find the optimal strategy with single-qubit measurements that yields the highest provable accuracy given a total measurement budget. As a central tool, we establish new tail bounds for empirical estimators of the energy. They are helpful for identifying measurement settings that improve the energy estimate the most. This task constitutes an $\NP$-hard problem. However, we are able to circumvent this bottleneck and use the tail bounds to develop a practical, efficient estimation strategy, which we call \emph{ShadowGrouping}. As the name indicates, it combines shadow estimation methods with grouping strategies for Pauli strings. In numerical experiments, we demonstrate that ShadowGrouping improves upon state-of-the-art methods in estimating the electronic ground-state energies of various small molecules, both in provable and practical accuracy benchmarks. Hence, this work provides a promising way, e.g., to tackle the measurement bottleneck associated with quantum many-body Hamiltonians.
\end{abstract}

\maketitle
\hypersetup{
pdftitle={Guaranteed efficient energy estimation of quantum many-body Hamiltonians using ShadowGrouping},
pdfsubject={Quantum many-body physics},
pdfauthor={Alexander Gresch, Martin Kliesch},
pdfkeywords={Grouping,
				classical shadows,
				shadow, energy, estimation,
				randomized, measurements, measurement, grouping, 
				sample, complexity, 
				Pauli, strings, observables, 
				vector, Banach, Bernstein, concentration, inequality, tail, bound, 
				quantum, many-body, Hamiltonian,  
				hybrid, variational, eigensolver, algorithm, VQE, VQA, 
				simulation, simulator, simulation, 
				readout, 
				statistical, systematic, error, L1, Lp, norm
				}
}

\section{Introduction}
\label{sec:introduction}
%
As their name suggests, observables are said to be the physically observable quantities in quantum mechanics.
Their expectation values play a paradigmatic role in quantum physics. 
However, quantum measurements are probabilistic and, in practice, expectation values have to be estimated from many samples, i.e., many repetitions of experiments. 
The arguably most important observables, such as quantum many-body Hamiltonians, cannot be measured directly but have some natural decomposition into local terms. 
Typically, they are estimated individually, in commuting groups 
\cite{Gokhale2019MinimizingStatePreparations,Jena2019PauliPartitioningWith,Crawford2019EfficientQuantumMeasurement,Verteletskyi2020MeasurementOptimizationVQE,Zhao2020MeasurementReductionVQE,Shlosberg2021GroupingA}, 
or using randomized measurements 
\cite{Huang2019PredictingFeatures,Huang2020Predicting,Hadfield2020Measurements_published,Huang21EfficientEstimationOf,Hadfield21AdaptivePauliShadows,Elben22RandomizedMeasurementToolbox_new} to keep the number of samples sufficiently low. 
So far, the focus has been on estimating the local terms first with individual error control and then combining them into the final estimate. 
Sample complexity bounds fully tailored to the estimation of many-body Hamiltonians are still missing.
\begin{figure}[t]
    \centering
    \includegraphics[width=\linewidth]{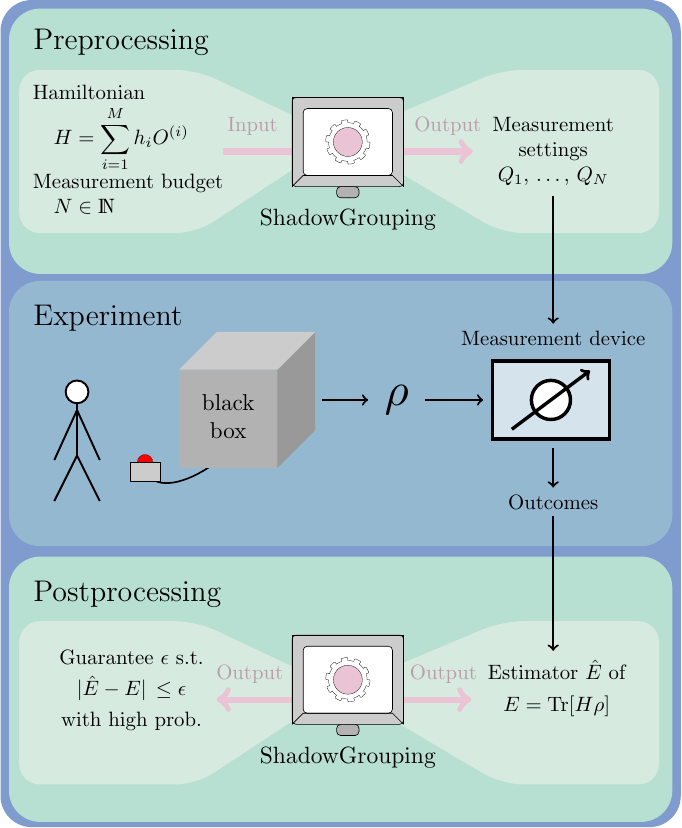}
    \caption{%
  Overview of our estimation protocol.
  \textbf{Top:} 
  as input, we are given a description of the Hamiltonian $H$ in terms of Pauli observables $O^{(i)}$ and a measurement budget $N$. 
  ShadowGrouping generates a list of measurement settings $(Q_i)_{i=1}^N$ in a preprocessing step, see \cref{subsec:shadowgrouping}. 
  \textbf{Center:} 
  then, $N$ copies of an unknown quantum state $\rho$ are prepared sequentially and measured such that the $i$-th copy is measured in the setting given by $Q_i$.
  The measurements result in $N$ bit strings as measurement outcomes.
  \textbf{Bottom:}
  in the postprocessing, the description of $H$ and the measurement outcomes are combined into the estimator $\hat{E}$ of the state's energy $E$ together with an accuracy upper bound $\epsilon$.
  The protocol works independently of the strategy that is used to generate the settings, which automatically results in different estimators and bounds $\epsilon$.
  Conversely, one can also minimize $\epsilon$ to obtain suitable measurement settings $(Q_i)_{i=1}^N$.
  }
  \label{fig:schematic}
\end{figure}

Energy estimation from not too many samples is becoming an increasingly critical task in applications. 
After advances on ``quantum supremacy'' \cite{Arute2019QuantumSupremacy_short_auth,Arrazola21QuantumCircuitsWith}, achieving a practical quantum advantage has now arguably become the main goal in our field. 
The perhaps most promising practical application is the simulation of physical systems \cite{Hoefler23DisentanglingHypeFromB}, as already suggested by Feynman \cite{Fey82}. 
The estimation of ground states of quantum many-body Hamiltonians plays a paradigmatic role in this endeavor. 
The two main ways to solve this task are (i) a digital readout of the energy as achieved by the phase estimation algorithm and (ii) a direct readout. 
Since (i) seems to require fault-tolerant quantum computation, which is out of reach at the moment, we focus on (ii) with particularly simple direct readout strategies that seem most amenable to \ac{NISQ} \cite{Pre18_new} hardware. 

As one concrete possible application of our energy estimation strategy, we discuss \acp{VQA}.  
In \acp{VQA}, one aims to only use short \acp{PQC} in order to finish the computation before the inevitable noise has accumulated too much. 
The most promising, yet challenging computational problems come from quantum chemistry or combinatorial optimization for which the \ac{VQE}~\cite{PerMcCSha14,Wecker2015,McCRomBab16} and the \ac{QAOA}~\cite{FarGolGut14,Zhou2020QAOA} have been proposed, respectively.
In either case, we aim to find the ground state of a given Hamiltonian $H$ by preparing a suitable trial state $\rho$ via the \ac{PQC}.
Its parameters need a classical optimization routine, often done via gradient-based methods.
In this case, the estimation of the gradient itself can be restated as an energy estimation task by using a parameter-shift rule \cite{Schuld19EvaluatingAnalyticGradients,Li17HybridQuantum-Classical,Mitarai18QuantumCircuitLearning,Wierichs21GeneralParameter-Shift,Kyriienko21GeneralizedQuantumCircuit,GilVidal18CalculusOnParameterized,Theis21OptimalityOfFinite-Support,Izmaylov21AnalyticGradients,Bittel22OptimalGradient}.
Therefore, the elementary energy estimation task remains even if the actual ground state lies in the ansatz class of the \ac{VQA} and if obstacles such as barren plateaus~\cite{McClean2018BarrenPlateausIn} or getting stuck in local minima~\cite{Bittel21TrainingVariationalQuantum,Bittel22OptimizingTheDepthB} are avoided.
We refer to the review articles \cite{Cerezo20VariationalQuantumAlgorithms,Bharti21NoisyIntermediateScale} for more details.

Analogue quantum simulators are another very promising approach to achieve a useful quantum advantage \cite{Daley22PracticalQuantumAdvantage}. 
In these approaches, a target state $\rho$ associated to a quantum many-body Hamiltonian is prepared, which could be a time-evolved state, a thermal state, or a ground state. 
Given this preparation, one or more observables of interest have to be measured to infer insights about $\rho$.
For instance, they could be some spin or particle densities, correlation functions, or an energy. 
All these observables are, however, captured by $k$-local observables.
There are various possibilities of how such quantum simulations could provide a useful quantum advantage on imperfect hardware \cite{Trivedi22QuantumAdvantageAnd_published}. 
Typically, analogue quantum simulators are limited in their readout capabilities. 
At the same time, the single-qubit control is rapidly improving, rendering them a more and more contesting alternative.

As each measurement requires its own copy of $\rho$, i.e., preparing the state again for each measurement, this constitutes a huge \emph{bottleneck}.
This is especially true in quantum chemistry applications where we require a high precision for the final energy estimate of the (optimized) state.
The resulting bottleneck is persistent no matter how we may design the actual \ac{PQC} preparing the trial state or which quantum simulator is considered.
As a consequence, tackling the measurement bottleneck is crucial for any feasible application of \ac{NISQ}-friendly hardware to any practicable task.
Hence, in order to keep the energy estimation feasible, reliable and controlled, we ask for the following list of desiderata to be fulfilled.
The energy estimation protocol should be
\begin{compactenum}[(i)]
	\item\label{item:Single} based only on basis measurements and single-qubit rotations, 
	\item\label{item:Guarantee} it comes along with rigorous guarantees and sample complexity bounds for the energy estimation,
	\item\label{item:ClassicalComput} the required classical computation must be practically feasible, and
	\item\label{item:Competitive} it should yield competitive results to state-of-the-art approaches. 
\end{compactenum}
Previous works addressed these points mostly separately. 
For such settings, two main paradigms for the energy estimation task have emerged: 
grouping strategies~\cite{Gokhale2019MinimizingStatePreparations,Jena2019PauliPartitioningWith,Crawford2019EfficientQuantumMeasurement,Verteletskyi2020MeasurementOptimizationVQE,Zhao2020MeasurementReductionVQE,Shlosberg2021GroupingA} and (biased) classical shadows~\cite{Huang2019PredictingFeatures,Huang2020Predicting,Hadfield2020Measurements_published,Huang21EfficientEstimationOf,%
Hadfield21AdaptivePauliShadows,Elben22RandomizedMeasurementToolbox_new,Arienzo22Closed-formAnalytic}
as well as a first framework to partially unify the two~\cite{Wu2021OverlappedGroupingA}.
We provide some more details in the \supplement.
A few ideas outside these paradigms also exist~\cite{Hillmich21DecisionDiagrams,Kohda21QuantumExpectationValue,McNulty2022B}.
Most of these works are compatible with \eqref{item:Single} \&~\eqref{item:ClassicalComput} and fulfill~\eqref{item:Competitive}.
However, the metrics introduced to track the amount of measurement reduction achieved leave~\eqref{item:Guarantee} unfulfilled.
This lack of guarantees is pernicious for two reasons.
On one hand, we want to be able to efficiently estimate Hamiltonian expectation values (or any other Hermitian operator for that matter) in relevant quantum experiments where the actual solution is not known and the qubits' number exceeds those used in the addressable benchmarks.
In quantum chemistry applications, for example, high precision is priority and a guarantee for the estimation error is key.
On the other hand, obtaining sample complexities for these quantum algorithms is vital in accessing their feasibility in reliably addressing problems with increasing number of qubits.
The current benchmarks already hint at a daunting measurement effort despite not even exceeding 20 qubits.
Understanding how the sample complexity of an energy estimation task and a particular choice for the measurement strategy scales with the number of qubits enables the user to forecast their chance of successfully completing the task beforehand.

In this work, we improve upon state-of-the-art estimation protocols and provide rigorous guarantees completing desideratum~\eqref{item:Guarantee}.
We summarize the estimation task and our contributions contained in the Results section in \cref{fig:schematic}.
In particular, they include rigorous guarantees for commonly used grouping techniques.
We do so by providing tail bounds on empirical estimators of the state's energy that are compatible with grouping strategies.
This way, our bound allows us to assess the accuracy and feasibility of typical state-of-the-art measurement schemes.
We show that minimizing this upper bound is \NP-hard in the number of qubits in the worst case.
As a heuristic solution, we propose our own measurement scheme which we call \emph{ShadowGrouping} that efficiently makes use of the observables' dominating contributions to the upper bound as well as their respective commutation relations.
We conclude with an outlook in \cref{sec:discussion}.
%
\section{Results}
\label{sec:results}
%
We structure our results as follows.
First, we provide the provable guarantees for measurement strategies in \cref{subsec:Bernstein}.
This subsection also includes the hardness of finding optimal measurement settings in the number of qubits, which shows that heuristic measurement optimization approaches are required. 
In particular, the hardness result motivates the conception of ShadowGrouping, presented in \cref{subsec:shadowgrouping}.
Finally, we numerically demonstrate in \cref{subsec:numerics} that ShadowGrouping improves upon other state-of-the-art approaches in the benchmark of estimating the electronic ground-state energy of various small molecules.

Throughout this work, $[k]$ denotes the set $\{ 1, \dots, k\}$ and we use $\mc{P} = \{ X,Y,Z\}$ as shorthand notation for labels of the Pauli matrices and $\mc P^n$ for Pauli strings, i.e., labels for tensor products of Pauli matrices.
Moreover, let $\mc{P}_\1 = \Set{\1,X,Y,Z}$ and $\mc{P}_\1^n$ be defined analogously.
Finally, the $p$-norm of a vector $\vec{x}$ is denoted as $\lpnorm{\vec x}$ and the absolute value of any $x \in \CC$ as $\abs{x}$.
%
\subsection{Equipping measurement strategies with provable guarantees}
\label{subsec:Bernstein}
%
In order to set the stage, we properly define the energy estimation task and give a notion of a measurement scheme.
%
\subsubsection{The energy estimation task}
%
Assume we are handed an $n$-qubit quantum state $\rho$ of which we want to determine its energy $E$ w.r.t.\ a given Hamiltonian $H$.
The energy estimation is not a straightforward task: due to the probabilistic nature of quantum mechanics, we have to estimate $E$ by many measurement rounds in which we prepare $\rho$ and measure it in some chosen basis.
Moreover, we typically cannot measure the state's energy directly.
Instead, we assume the Hamiltonian to be given in terms of the Pauli basis as
\begin{equation}
	H = \sum_{i=1}^M h_i O^{(i)}\, , \quad\quad O^{(i)} = \bigotimes_{j=1}^n O_j^{(i)}
\end{equation}
with $h_i \in \RR$ and single-qubit Pauli operators $O_j^{(i)} \in \mc{P}_\1$.
Often, we identify $H$ with its decomposition 
\begin{equation}\label{eq:Hamiltonian}
	H\equiv \left(h_i, O^{(i)}\right)_{i\in [M]} \,.
\end{equation}
Without loss of generality, we assume that $O^{(i)} \neq \1^{\otimes n}\ \forall i$.
To ensure the feasibility of this decomposition, we require that $M = \LandauO(\textrm{poly}(n))$.
This is the case, for example, in quantum chemistry applications where $M$ scales as $n^4$.

Given a quantum state $\rho$, the energy estimate is determined by evaluating each expectation value $o^{(i)}\coloneqq \Tr[\rho\, O^{(i)}]$. 
By $\hat{o}^{(i)}$ we denote the empirical estimator of $o^{(i)}$ from $N_i$ samples.
In more detail, $\hat o^{(i)} \coloneqq \frac{1}{N_i} \sum_{\alpha=1}^{N_i} y_\alpha$, where $y_\alpha\in\Set{-1,1}$ are iid.\ random variables determined by Born's rule $\PP[y_\alpha=1]=\Tr[\rho\,(Q_i-\1)/2]$.
We assume that each $\hat o^{(i)}$ is estimated from iid.\ preparations of $\rho$.
This assumption solely stems from the proof techniques of the classical shadows \cite{Huang21EfficientEstimationOf} used in order to arrive at our estimator \eqref{eq:energy_estimator} below.
We expect this assumption to be loosened in the future such that we only need to assume unbiased estimators $\hat o^{(i)}$.
In either case, we do not assume different $\hat o^{(i)}$ to be independent. 
In particular, we can reuse the same sample to yield estimates for multiple, pair-wise commuting observables at once.

Leveraging standard commutation relations requires many two-qubit gates for the readout, increasing the noise in the experiment or quantum circuit enormously.
Therefore, we impose the stronger condition of \ac{QWC}:
any two Pauli strings $P = \bigotimes_i P_i$, $Q = \bigotimes_i Q_i$ \emph{commute qubit-wise} if $P_i$ and $Q_i$ commute for all $i \in[n]$. 
Again, the empirical estimators $\hat{o}^{(i)}$ do not have to be independent as a consequence of using the same samples for the estimation of several (qubit-wise) commuting observables.
Using these estimators, the energy can be determined.
By linearity of \cref{eq:Hamiltonian} we have that
\begin{equation}
	E = \sum_{i=1}^M h_i o^{(i)}\, , \quad\quad \hat{E} = \sum_{i=1}^M h_i \hat{o}^{(i)}\,.
	\label{eq:energy_estimator}
\end{equation}
to which we refer as the \emph{grouped empirical mean estimator}.
%
\subsubsection{Measurement schemes \& compatible settings}
%
For conciseness, we introduce our notions of measurement settings, schemes and compatible Pauli strings in \cref{def:measurement_scheme,def:compatability} in the following.
\begin{definition}[Measurement scheme and settings]
\label{def:measurement_scheme}
Let $H$ be a Hamiltonian of the form \eqref{eq:Hamiltonian} and $N \in \NN$ a number of measurement shots. 
An algorithm $\mc A$ is called a \emph{measurement scheme} if it takes $(H,N)$ as input and returns a list of \emph{measurement settings} $\vec{Q} \in (\mc{P}^n)^N$ specifying a setting for each measurement shot. 
\end{definition}
Having formalized what a measurement schemes does, we have to take a look at the Pauli strings in the Hamiltonian decomposition~\eqref{eq:Hamiltonian} and their commutation relations as they effectively require various distinct measurement settings to yield an unbiased estimate of the energy.
To this end, we define how we can relate the target Pauli strings to a proposed measurement setting:
\begin{definition}[Compatible measurement]
\label{def:compatability}
Consider a Pauli string $Q \in \mc{P}^n$ as a measurement setting.
A Pauli string $O \in \mc{P}_\1^n$ is said to be \emph{compatible} with $Q$ if $O$ and $Q$ commute.
Furthermore, they are \emph{\acs{QWC}-compatible} if $O$ and $Q$ commute qubit-wise.
We define the \emph{compatibility indicator} $\mc{C}: \mc{P}_\1^n \times \mc{P}_\1^n \rightarrow \Set{\mathrm{True}\equiv 1, \mathrm{False}\equiv 0}$ such that $\mc C[O,Q]=\mathrm{True}$ if and only if $O$ and $Q$ are compatible.
Analogously, we define $\mc{C}_\mathrm{QWC}$ that indicates \ac{QWC}-compatibility.
\end{definition}
%
\subsubsection{Measurement guarantees}
%
With these two definitions, we are able to formalize what we mean by equipping a measurement scheme with guarantees.
As sketched in \cref{fig:schematic}, we are given access to a device or experiment that prepares an unknown quantum state $\rho$ and some Hamiltonian \eqref{eq:Hamiltonian}.
Not only do we want to estimate its energy from repeated measurements, but we would like to accommodate the energy estimator with rigorous tail bounds.
That is, we wish to determine how close the estimate $\hat{E}$ is to the actual but unknown energy $E$ and how confident we can be about this closeness. 
Mathematically, we capture the two questions by the \emph{failure probability}, i.e., the probability that $\abs{\hat{E}-E} \geq \epsilon$ for a given estimation error $\epsilon > 0$.
In general, this quantity cannot be efficiently evaluated (as it depends on the unknown quantum state produced in the experiment).
Nevertheless, we can often provide upper bounds to it that hold regardless of the quantum state under consideration.
One crucial requirement is that we simultaneously want to minimize the total number of measurement rounds.
For instance, in grouping strategies we extract multiple samples from a single measurement outcome.
Here, the grouping is carried out such that the variance, $\text{Var}[\hat E]$, of the unbiased estimator~\eqref{eq:energy_estimator} is minimized.
By virtue of Chebyshev's inequality, this serves as an upper bound to the failure probability.
However, $\Var[\hat E]$ is neither known and one has to rely on estimating it by empirical (co)variances of the Pauli observables.
This, in turn, introduces an additional error that is both unaccounted for and in general not negligible for finite measurement budgets~\cite[Proposition~4]{Mourtada22EmpVarError}.
However, due to the introduced correlation between samples for commuting Pauli terms in the Hamiltonian~\eqref{eq:Hamiltonian}, standard arguments based on basic tail bounds cannot be applied.
We resolve both issues by formulating a modified version of the vector Bernstein inequality~\cite{Gro11,Ledoux1991} 
in a first step to bound the joint estimation error of each of the contributing Pauli observables.
In particular, this takes into account any correlated samples that stem from the same measurement round.
In the same step, we extend the inequality to arbitrary random variables in a separable Banach space which might be of independent interest.
Afterwards, we show that this actually serves as an upper bound of the absolute error of the energy estimation.
In particular within the grouping framework, this signifies a paradigm shift: our guarantees are a step towards fulfilling desideratum~\eqref{item:Guarantee} \emph{unconditionally}, i.e., without having to rely on estimating the variance of the estimator in the first place.

The energy estimation inconfidence bound reads as follows. 
\begin{theorem}[Energy estimation inconfidence bound]
\label{theorem:Bernstein}
Consider $\vec Q$ obtained from a measurement scheme for some input $(H,N)$.  
Let $\delta \in (0,1/2)$.
Fix a compatibility indicator $f=\mc{C}$ or $f=\mc{C}_\mathrm{QWC}$. 
Denote the number of compatible measurements for observable $O^{(i)}$ by 
$N_i(\vec Q)\coloneqq \sum_{j=1}^N f(Q_j,O^{(i)})$ and assume $N_i \geq 1$ for all $i\in [M]$ (we usually drop the $\vec Q$-dependence).
Denote $h'_i \coloneqq \abs{h_i} / \sqrt{N_i}$ and $h''_i \coloneqq \abs{h_i} / N_i$. 
Moreover, let $2\lpnorm[1]{\Vec{h'}} \leq \epsilon \leq 2\lpnorm[1]{\Vec{h'}} (1 + \lpnorm[1]{\Vec{h'}} / \lpnorm[1]{\Vec{h''}})$. 
Then any grouped empirical mean estimator~\eqref{eq:energy_estimator} satisfies 
\begin{equation}
	\mathds{P} \left[ \abs{\hat{E}-E} \geq \epsilon \right] \leq \exp\left(-\frac{1}{4} \left[\frac{\epsilon}{2\lpnorm[1]{\Vec{h'}}}-1\right]^2 \right)\,.
	\label{eq:vector_bernstein}
\end{equation}
\end{theorem}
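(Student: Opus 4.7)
The plan is to decompose the estimation error into a sum of $N$ independent mean-zero random variables indexed by the measurement round and to apply a Gross--Ledoux-type Bernstein inequality. The conceptual point that makes the ``modified'' adjective meaningful is that \emph{collecting all contributions of one round into a single random variable absorbs every intra-round correlation}, so only independence across rounds is required. Concretely, I would reorder the sums in \cref{eq:energy_estimator} as
\begin{equation*}
\hat{E}-E \;=\; \sum_{j=1}^{N} V_j, \qquad V_j \coloneqq \sum_{i=1}^{M} \frac{h_i}{N_i}\, f\!\left(Q_j,O^{(i)}\right)\bigl(y_j^{(i)}-o^{(i)}\bigr),
\end{equation*}
where $y_j^{(i)}\in\{-1,+1\}$ denotes the Pauli eigenvalue assigned to $O^{(i)}$ by the $j$-th outcome (well-defined whenever $f(Q_j,O^{(i)})=1$). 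Independence across $j$ comes from the iid preparation assumption after \cref{eq:Hamiltonian} and $\EE[V_j]=0$ from $\EE[y_j^{(i)}]=o^{(i)}$.

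The two Bernstein ingredients can then be read off. A uniform bound follows from $|y_j^{(i)}-o^{(i)}|\leq 2$ and $f\leq 1$, giving $|V_j|\leq 2\lpnorm[1]{\vec{h''}}$. For the variance proxy I use Cauchy--Schwarz on the covariances, $|\Cov(y_j^{(i)},y_j^{(i')})|\leq \sqrt{\Var(y_j^{(i)})\Var(y_j^{(i')})}\leq 1$, and obtain
\begin{equation*}
\sum_{j=1}^{N}\EE[V_j^{2}] \;\leq\; \sum_{i,i'}\frac{|h_i|\,|h_{i'}|}{N_iN_{i'}}\, N_{ii'},\qquad N_{ii'}\coloneqq\sum_{j}f(Q_j,O^{(i)})\,f(Q_j,O^{(i')}).
\end{equation*}
The key algebraic step is $N_{ii'}\leq \min(N_i,N_{i'})\leq \sqrt{N_iN_{i'}}$ (AM--GM), which lets the double sum collapse to $\bigl(\sum_i |h_i|/\sqrt{N_i}\bigr)^{2}=\lpnorm[1]{\vec{h'}}^{2}$; this is precisely where the $\sqrt{N_i}$-weights of $\vec{h'}$ (rather than the $N_i$-weights of $\vec{h''}$) appear.

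Finally I plug $\sigma\leq \lpnorm[1]{\vec{h'}}$ and $B=2\lpnorm[1]{\vec{h''}}$ into a Gross/Ledoux tail bound of shape $\Pr[\abs{S}\geq 2\sigma+t]\leq \exp(-t^{2}/(16\sigma^{2}))$, valid in a Gaussian window $t\lesssim \sigma^{2}/B$, and substitute $t=\epsilon-2\lpnorm[1]{\vec{h'}}$. This simultaneously reproduces \cref{eq:vector_bernstein} and translates the validity window into the explicit upper bound on $\epsilon$ stated in the theorem. The main obstacle is securing that Bernstein inequality: one must (i) weaken the usual independence assumption so the inequality tolerates the intra-round correlations — pre-grouping into $V_j$ reduces this to the standard ``independent across rounds'' setting — and (ii) lift Gross/Ledoux from Hilbert to an arbitrary separable Banach space, as the text advertises for more general guarantees, by replacing inner-product chaining with supporting functionals obtained via Hahn--Banach so that the entropy/moment estimates carry over. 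Once (i)--(ii) are in place, the remaining substitutions are routine algebra.
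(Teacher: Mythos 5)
Your proof is correct and reaches \cref{eq:vector_bernstein}, but by a genuinely different route than the paper. The paper does not decompose by measurement round: it builds $N$ independent random \emph{vectors} $\vec{v}_k \in \RR^M$ (one coordinate per observable, indexed by the $k$-th \emph{sample} of each observable after shuffling zero entries around), bounds $\abs{\hat E - E} \le \lpnorm[1]{\sum_k \vec v_k}$ by the triangle inequality, and then applies an $\ell_1$ vector Bernstein inequality (\cref{corollary:Bernstein_vector}, proved via a Doob martingale and Ledoux's Lemma 6.16 --- not via Hahn--Banach supporting functionals as you suggest). Your scalar per-round variables $V_j$ sidestep both the vector-valued machinery and the paper's somewhat delicate re-indexing of samples: independence across rounds is immediate, and the intra-round correlations are handled explicitly through the covariances rather than being absorbed into the $\ell_1$-norm of a random vector. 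Both arguments pivot on the identical algebraic step $N_{ii'} \le \min(N_i,N_{i'}) \le \sqrt{N_i N_{i'}}$ (the paper's Eq.~\eqref{eq:expected_sample_norm_variance}); note this is simply $\min\le$ geometric mean, not AM--GM. What your route buys: the variance proxy comes out as $\lpnorm[1]{\vec{h'}}^2$ rather than the paper's $4\lpnorm[1]{\vec{h'}}^2$ (the paper pays a factor of $4$ by bounding $\abs*{\hat o^{(i)}_k - o^{(i)}}\,\abs*{\hat o^{(i')}_k - o^{(i')}} \le 4$ pointwise where you bound the covariance by $1$), so you could in fact prove the strictly stronger statement with $\lpnorm[1]{\vec{h'}}$ in place of $2\lpnorm[1]{\vec{h'}}$ in \cref{eq:vector_bernstein}; your deliberate loosening to match the stated form is legitimate. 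Two cosmetic caveats: your closing item (ii) about lifting the inequality to general separable Banach spaces is unnecessary for your own argument (your $S$ is a scalar, so \cref{corollary:Bernstein_vector} with $d=1$, or even the classical scalar Bernstein inequality, suffices), and the precise constant in the validity window for $\epsilon$ depends on which variance proxy is fed into the Gaussian-window condition $t\le V/\max_j\abs{V_j}$ --- but the paper is equally loose on exactly that point.
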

We sketch a proof of this theorem in the Methods \cref{subsec:proof_energy_inconfidence_bound} and provide a detailed proof in the \supplement.
This result shows that we can equip any measurement scheme with guarantees, which hold uniformly for all quantum states. 
In particular, it is compatible with correlated samples, rendering it applicable to popular grouping strategies.
Additionally, \cref{theorem:Bernstein} also serves as a benchmarking tool:
given a Hamiltonian decomposition \eqref{eq:Hamiltonian} and a confidence $\delta \in (0,1/2)$, we can compare any two measurement strategies, each of them preparing a certain number of measurement settings:
We set the right-hand side of \cref{eq:vector_bernstein} equal to $\delta$ and solve for $\epsilon$. 
This calculation yields the error bound 
\begin{equation} \label{eq:Bernstein_epsilon}
	\epsilon \leq 6\log \frac{1}{\delta} \lpnorm[1]{\vec{h'}}\,,
\end{equation}
which is automatically larger than $2\lpnorm[1]{\vec{h'}}$. 
It has a similar scaling as the expected statistical deviation in Ref.~\cite[Eq.~(13)]{Shlosberg2021GroupingA} but is stronger in the sense that it is a tail-bound-based guarantee.
The minimization of $\epsilon$ over $\vec{Q}$ we refer to as the optimization of a measurement scheme.

\subsubsection{Optimization of a measurement scheme}
%
We wish to choose a measurement scheme such that the guarantee parameter $\epsilon$~\eqref{eq:Bernstein_epsilon} is minimized.
One option is to introduce a small systematic error in favor of a larger statistical error, i.e., to introduce a biased estimator of the energy.
A straightforward idea is to remove certain observables from the Hamiltonian, i.e., a truncation of the Pauli decomposition~\cite{McCRomBab16}.
In \cref{corollary:truncation_criterion} in the \supplement, we find conditions (based on \cref{theorem:Bernstein}) under which such a truncation is justified.
Interestingly, this does not depend on the coefficients of the Pauli terms but only on how many compatible measurement settings we have available.
Another idea to increase the guaranteed precision is to optimize over the measurement setting(s) $\vec{Q}$.
However, this optimization is \NP-hard in the number of qubits:
\begin{proposition}[Hardness of optimal allocation (informal version)]
\label{theorem:hardness}
Consider a Hamiltonian~\eqref{eq:Hamiltonian}, state $\rho$, the grouped empirical mean estimator $\hat{E}$~\eqref{eq:energy_estimator} and $N \geq 1$ a number of measurement settings. 
Then, finding the measurements settings $\vec{Q} \in (\mc{P}^n)^N$ that minimize a reasonable upper bound to $\PP[\abs{\hat{E} - E} \geq \epsilon]$, such as \cref{eq:vector_bernstein}, is \NP-hard in the number of qubits $n$.
In particular, it is even \NP-hard to find a single measurement setting that lowers this bound the most.
\end{proposition}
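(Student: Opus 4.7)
The plan is to reduce to NP-hard graph problems via a Pauli encoding of arbitrary input graphs. Up to a multiplicative constant independent of $\vec{Q}$, the error bound~\eqref{eq:Bernstein_epsilon} equals $\sum_{i=1}^M |h_i|/\sqrt{N_i(\vec Q)}$ and is finite only when each $N_i\geq 1$. Given existing counts $(N_i)_i\geq 1$, appending one further setting $Q$ decreases this quantity by
\begin{equation*}
\Delta(Q)=\sum_{i=1}^M |h_i|\left(\frac{1}{\sqrt{N_i}}-\frac{1}{\sqrt{N_i+1}}\right)\mc{C}_{\mathrm{QWC}}[Q,O^{(i)}].
\end{equation*}
The combinatorial core is that the set $S(Q)=\{i:\mc{C}_{\mathrm{QWC}}[Q,O^{(i)}]=1\}$ is always a \emph{clique} in the QWC-commutativity graph of $(O^{(i)})_i$: any two strings that are qubit-wise compatible with a common identity-free $Q$ must agree on all their non-identity positions and hence QWC-commute. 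Conversely, every clique of that graph arises as $S(Q)$ for some $Q\in\mc{P}^n$.

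Given an arbitrary graph $G=(V,E)$, I would build a Hamiltonian whose QWC-commutativity graph is exactly $G$. Introduce one qubit $q_{uv}$ for every non-edge $\{u,v\}\notin E$ and set $O^{(u)}_{q_{uv}}=X$, $O^{(v)}_{q_{uv}}=Y$, and $O^{(w)}_{q_{uv}}=\1$ for $w\notin\{u,v\}$. By construction $O^{(u)}$ and $O^{(v)}$ fail to QWC-commute precisely when $\{u,v\}\notin E$, and the total qubit count $\binom{|V|}{2}-|E|$ is polynomial in $|V|$.

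To prove the ``in particular'' part, reduce from weighted maximum clique: fix $N_i=1$ and $|h_i|=w_i/(1-1/\sqrt 2)$ so that $\Delta(Q)=\sum_{i\in S(Q)}w_i$ is exactly the weight of the corresponding clique. A polynomial-time algorithm for maximizing $\Delta$ would then solve weighted max-clique, which is $\NP$-hard. For the global $N$-setting problem, reduce from minimum clique cover (equivalently, chromatic number of the complement graph): the bound $\sum_i |h_i|/\sqrt{N_i(\vec Q)}$ is finite if and only if every $O^{(i)}$ is QWC-compatible with at least one of the $N$ chosen settings, i.e., the $N$ cliques $S(Q_1),\dots,S(Q_N)$ cover $V$. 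Hence deciding whether the optimal bound is finite for a given $N$ is precisely the clique cover decision problem on $G$, which is $\NP$-hard.

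The main obstacle is establishing the tight correspondence between QWC-compatible sets and cliques, especially the converse direction that every clique $C$ is realized by some $Q\in\mc{P}^n$. This requires a per-qubit case analysis: on $q_{uv}$, at most one of $u,v$ lies in $C$ because $\{u,v\}\notin E$, so one sets $Q_{q_{uv}}=X$ if only $u\in C$, $Y$ if only $v\in C$, and an arbitrary non-identity Pauli otherwise. A secondary point is the claimed generality across ``reasonable'' upper bounds: the argument relies only on the separable, $N_i$-monotone structure $\sum_i f(|h_i|,N_i)$ of the bound, and therefore adapts verbatim to any such form.
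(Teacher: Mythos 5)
Your proposal is correct and follows essentially the same route as the paper: the appendix formalizes the problem as BEST-ALLOCATION-SET and proves hardness by reduction from MIN-GROUPING (which is exactly minimum clique cover of the commutativity graph) for general $N$ and from CLIQUE for $N=1$, using the same non-edge Pauli gadget that you construct explicitly but the paper imports from Ref.~\cite[Algorithm~2]{Wu2021OverlappedGroupingA}. The only substantive difference is that the paper avoids your ``bound is finite iff every observable is covered'' dichotomy by assigning a finite value $f_\alpha(0)$ to uncovered observables and calibrating the weights so that $\Delta_\gamma(0) > M\,\Delta_\gamma(1)$, which forces full coverage at the optimum and makes the argument robust for any convex, monotonically decreasing tail bound rather than relying on divergence at $N_i=0$.
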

The formal statement of \cref{theorem:hardness} and its proof are contained in the \supplement.
In summary, we show the hardness by reducing the optimization of the measurement scheme from a commonly used grouping technique. 
Since finding the optimal grouping strategy is known to be \NP-hard~\cite{Wu2021OverlappedGroupingA,karp1972reducibility}, this also transfers over to the optimization of the measurement scheme.
Therefore, we have to rely on heuristic approaches to practically find suitable measurement settings. 
In the following, we devise our own efficient measurement scheme
that is aware of both the upper bound and the commutation relations among the Pauli observables to find such settings.
%
\subsection{ShadowGrouping}
\label{subsec:shadowgrouping}
%
We aim to determine the energy $E$ by measuring the individual Pauli observables in \cref{eq:Hamiltonian}. 
In order to increase the accuracy of the prediction with the smallest number of measurement shots
possible, \cref{theorem:Bernstein} suggests minimizing \cref{eq:Bernstein_epsilon}. 
The minimization is done by choosing the most informative measurement settings by exploiting the commutativity relations of the terms in the Hamiltonian decomposition.
However, \cref{theorem:hardness} states that finding the next measurement settings that reduce the current inconfidence bound the most is \NP-hard, even when trying to find a single measurement setting. 
As a suitable heuristic, we propose an approach that makes use of the structure of the terms in the tail bound and which we call \emph{ShadowGrouping}.
It makes use of the fact that there exists a natural hierarchy for each of the terms in the decomposition: we order the Pauli observables decreasingly by their respective importance to the current inconfidence bound.
This gives rise to a non-negative weight function $\wt$ that takes the Hamiltonian decomposition~\eqref{eq:Hamiltonian} and a list of previous measurement settings $\vec Q$ as inputs and outputs a non-negative weight $w_i$ for each Pauli observable in the decomposition.
Here, the weight is defined as
\begin{equation}
\begin{aligned}
	\wt(\vec Q,H)_i
	&\coloneqq
	\abs{h_i}\frac{\sqrt{N_i+1}-\sqrt{N_i}}{\sqrt{N_i(N_i+1)}} > 0
	\\
	\text{with}\ N_i
	&=
	N_i(\vec Q)\,.
\end{aligned}
\label{eq:weight_function}
\end{equation}
Details on the motivation for this choice can be found in the Methods \cref{subsec:Bernstein_weight_function}.
The function $\wt$ takes into account two key properties: the importance $\abs{h_i}$ of each observable in the Hamiltonian~\eqref{eq:Hamiltonian} and how many compatible measurement settings we collected previously.
A larger weight increases the corresponding observable's contribution to the bound as statistical uncertainties get magnified.
On the other hand, this uncertainty is decreased by collecting more compatible settings.
As the weights are derived from tail bounds to the estimation error, the individual contributions dwindle when increasing the number of compatible measurement settings.
Iterative application of ShadowGrouping thus ensures that each observable eventually has at least one compatible measurement setting.
\begin{figure}[b]
\centering
\includegraphics[width=0.475\textwidth]{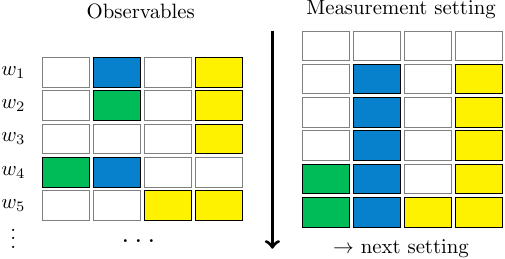}
\caption{%
Sketch of the generation of a measurement setting $Q\in \mc P^n$. 
Each box corresponds to a single-qubit operator in the tensor product.
Empty boxes correspond to $\1$ and the three colours to each of the Pauli operators, respectively.
We order the observables descendingly by their respective weights, i.e., $w_1 \geq w_2 \geq w_3 \geq \dots \geq w_M$, each of which is computed with the function $\wt$. 
The arrow in the middle indicates the order in which $Q$ is adjusted to the observables indicated on the left.
Only the observable in the second row cannot be measured with the measurement setting proposed by the algorithm. 
After single-qubit measurements have been assigned they cannot be altered any more to ensure compatibility with previously considered observables.
}
\label{fig:algorithm_idea}
\end{figure} %

We now explain how ShadowGrouping utilizes these weights to find measurement settings tailored to the Hamiltonian.
A sketch of the algorithm is presented in \cref{fig:algorithm_idea} but we also provide the pseudocode for ShadowGrouping in \cref{alg:ShadowGrouping}.
This routine works for both \ac{QWC} and general commutativity, respectively.
We call the \emph{idle part} of a measurement setting $Q$ the set of those qubits where $Q$ acts as the identity. 
The idea of the algorithm is as follows. 
We start with an idle measurement setting $Q = \1^{\otimes n}$. 
First, we order the observable list by their respective value of $\wt$.
Assume $O$ to be the next element from this ordered list.
For the \ac{QWC}-case, we simply check whether $\mc{C}_\mathrm{QWC}[O,Q]$, see \cref{def:compatability}.
If so, we allow changing the idle parts of $Q$ to match $O$.
For example, $\mc{C}_\mathrm{QWC}[X\1 , \1 Y]$, thus we would alter $\1 Y$ into $XY$ in Line~\ref{algline:update} of \cref{alg:ShadowGrouping}. 
The case of general commutativity is slightly more involved and we have to inspect three cases.
If the support of the observable $O$ to be checked is disjoint with the support of $Q$, $O$ and $Q$ commute (and we update as above).
Updating $Q$'s remaining idle parts at later stages of the algorithm does not change that. 
If the support of $O$, on the other hand, is a subset of the support of $Q$, we only need to check the commutativity on $O$'s support and never update $Q$ regardless of the outcome. 
If neither case, we first check whether $O$ and $Q$ commute when restricting to the support of $Q$.
If not, we know that $O$ and $Q$ break \ac{QWC} on an odd number of qubits~\cite{Gokhale2019MinimizingStatePreparations}.
For example, observable $O = YY$ and $Q = \1 X$ break \ac{QWC} on a single qubit, but $YY$ does commute with both $ZX$ and $XX$.
Thus, as long as there are idle parts in $Q$ left, we can alter $Q$ at a single qubit such that it commutes with $O$.
Eventually, there are no more identities in $Q$ left or all observables have been checked. 
In either case, we have found the next measurement setting.
If we consider \ac{QWC}, each element in $Q$ directly tells us in which Pauli basis the corresponding qubit has to be measured.
If general commutativity is to be considered, we can simply run through the list again and keep track of all observables commuting with $Q$.
From this set, we can construct a suitable quantum circuit to measure them jointly~\cite{Verteletskyi2020MeasurementOptimizationVQE}.
In either case, we refresh the weights \eqref{eq:weight_function} afterwards according to the updated $N_i$, and iterate.
\begin{algorithm}[H]
\caption{ShadowGrouping.}\label{alg:ShadowGrouping}
\begin{algorithmic}[1]
\Require Hamiltonian decomposition $H = (h_i,O^{(i)})_{i\in[M]}$
\Require previous measurement settings $\vec Q \in (\mc{P}^n)^{N-1}$
\Require function $\Call{$\wt$}$ to attribute a weight to each observable, e.g.\ \cref{eq:weight_function}
\Require boolean function $\Call{check\_even}$ that counts the number of qubits on which two Pauli strings do not have \ac{QWC} and returns whether it is even
\Require compatibility indicator $f=\mc{C}$ or $\mc{C}_\mathrm{QWC}$
\Statex
\State $Q \gets \1^{\otimes n}$
\State $w_i \gets \Call{$\wt$}{\vec Q, H}_i\ \forall i$
\While{$\abs{\supp(Q)} < n$ \textbf{and} $\max_i w_i > 0$}
	\State $j \gets \argmax_i w_i$\Comment{or use $\Call{argsort}$ instead}
	\State $S \gets \supp(O^{(j)})\backslash\supp(Q)$ \Comment{relevant idle parts in $Q$}
	\If{$f=\mc{C}$ \textbf{and not} $\Call{check\_even}{O^{(j)},Q}$}
		\State pick $i \in S$ at random
		\State $Q_i \gets \mc{P}\backslash\Set{O_i^{(j)}}$ at random
		\State $S \gets S\backslash\Set{i}$
	\EndIf
	\If{$f[O^{(j)}, Q]$}
		\State update $Q$ s.t.\ $O_i^{(j)} = Q_i\ \forall i\in S$\label{algline:update}
	\EndIf
	\State $w_j \gets 0$ 
\EndWhile \\
\Return $Q$ 
\end{algorithmic}
\end{algorithm}

Our algorithm has two major advantages over state-of-the-art strategies.
First, our algorithm is highly adaptable:
it only requires a weight function $\wt$ that provides a hierarchy for the Pauli observables.
Moreover, in case $\wt$ is derived from an actual upper bound like \cref{eq:vector_bernstein}, we can adapt the hierarchy after each round while improving the guarantees from \cref{theorem:Bernstein}.
This does not require carrying out the readout, we merely keep track of previous settings.
This way, we can apply ShadowGrouping in an on-line setting as we do not require a costly preprocessing step as in typical grouping schemes~\cite{Wu2021OverlappedGroupingA,Shlosberg2021GroupingA}.
As another consequence, our scheme is capable of adapting to previous measurement settings, similar to the Derandomization approach of Ref.~\cite{Huang21EfficientEstimationOf}.
Secondly, the algorithm is also efficient:
each pass through the algorithm has a computational complexity of $\LandauO(M\log(M))$ due to the sorting of the $M$ weighted observables.
Updating the measurement setting $Q$ while traversing the ordered list ensures that the complexity does not increase further.
Standard grouping techniques, on the other hand, compute the whole commutativity graph which requires $\LandauO(M^2)$.
Our procedure thus corresponds to a continuously adapting overlapped grouping strategy~\cite{Wu2021OverlappedGroupingA} but also incorporates the performance guarantees obtained from the classical shadow paradigm, hence our naming scheme.
\subsection{Numerical benchmark}
\label{subsec:numerics}
One common benchmark to compare the performance of the various measurement schemes is the estimation of the electronic ground-state energy $E$ of various small molecules~\cite{Huang21EfficientEstimationOf}.
The fermionic Hamiltonian given a molecular basis set has been obtained using Qiskit~\cite{Qiskit}
which also provides three standard fermion-to-qubit mappings: \ac{JW}~\cite{JordanWigner}, \ac{BK}~\cite{BravyiKitaev} and the Parity transformation~\cite{BravyiKitaev,ParityTransformation}.
Then, the Hamiltonian is exact diagonalized to obtain the state vector of the ground state and its energy $E$ for the benchmark.
Together, we obtain the Hamiltonian decomposition~\eqref{eq:Hamiltonian} and are able to run the various measurement schemes to obtain an estimate $\hat{E}$ of $E$ by repeatedly drawing samples from the state.
The code generating the results can be found in a separate repository~\cite{ShadowGrouping_git_repo}.
We compare ShadowGrouping with other state-of-the-art methods such as Overlapped Grouping (with parameter $T = 10^7$)~\cite{Wu2021OverlappedGroupingA}, Adaptive Pauli estimation~\cite{Hadfield21AdaptivePauliShadows}, Derandomization (with default hyperparameter $\epsilon^2 = 0.9$)~\cite{Huang21EfficientEstimationOf} and AEQuO (with parameters $L=3$, $l=4$ and $N_\mathrm{tot} = 10^5$)~\cite{Shlosberg2021GroupingA}.
To keep the comparison fair, we only examine methods that utilize Pauli basis measurements without any additional two-qubit gates.
This excludes grouping methods that focus solely on general commutation relations~\cite{Gokhale2019MinimizingStatePreparations,Jena2019PauliPartitioningWith,%
Crawford2019EfficientQuantumMeasurement,Verteletskyi2020MeasurementOptimizationVQE,%
Zhao2020MeasurementReductionVQE}. 

For the benchmark, we consecutively generate a single measurement setting from a given measurement scheme and measure it once using qibo~\cite{qibo_paper}.
We repeat this procedure $N$ times to yield an energy estimate $\hat E_N$ over $N$ measurement outcomes.
Increasing $N$ successively, we keep track of two measures for the estimation quality.
The first one we refer to as the \textit{empirical} measure since we choose the \ac{RMSE} defined as
\begin{equation}
	\label{eq:rmse}
	\mathrm{RMSE}_N \coloneqq \sqrt{\frac{1}{N_\mathrm{runs}} \sum_{i=1}^{N_\mathrm{runs}} \( \hat{E}_N^{(i)} - E \)^2 }
\end{equation}
over $N_\mathrm{runs} = 100$ independent runs.
\begin{figure*}[t]
	\centering
	\include{figures/NH3_comparison_JW}
	\caption{%
	Energy estimation task of the electronic structure problem.
	\textbf{Left}: empirical accuracies of the various methods for \ce{NH3} (\ac{JW} encoded) as a function of the number of allocated measurements $N$.
	Deviations of the respective energies are reported as the \ac{RMSE} in units of Ha. Chemical accuracy is reached below a value of 1.6 mHa. 
	The error bars indicate the empirical standard deviation of each data point. 
	The various methods of the benchmark are discussed in the main text.
	The data for Derandomization and our tail bound lie above the chart.
	If possible, each respective data has been fitted according to \cref{eq:fit_function} (dotted lines).
	\textbf{Right}: the corresponding exponent including fit uncertainties for all probed molecules.
	All molecules have been encoded using the STO-3G minimal basis set except for \ce{H2} for which we have chosen the 6-31G encoding.
	$n$ indicates the number of qubits required to encode the respective molecule.
	}
	\label{fig:benchmark}	
\end{figure*}
This measure requires knowledge of the solution to the problem to be solved and can, hence, only be applied to problem instances of small system sizes.
Nevertheless, its direct correspondence to the quantity of interest, the estimation error $\abs{\hat E - E}$, renders it a suitable benchmarking tool on known cases.

For general cases, however, we can investigate our guarantee instead, which does not require the knowledge of any ground-state energy.
By virtue of \cref{theorem:Bernstein}, we have access to guarantees for the estimation accuracy given the proposed measurement settings.
Using only these settings of each measurement scheme, i.e., without preparing the quantum state $\rho$, we calculate the corresponding guaranteed accuracy.
Because we do not require any state dependence (as no samples are drawn from $\rho$ at all), this comparison yields a rigorous and practical comparison for the schemes.
With this approachable figure of merit, we again generate measurement settings and track the guaranteed accuracy $\epsilon_\mc{A}$, the RHS of \cref{eq:Bernstein_epsilon}, over the number of measurement settings $N$.
As we show in \cref{corollary:truncation_criterion} in the \supplement, for small numbers $N$ of measurements it is always beneficial to truncate the Hamiltonian in a controlled way.
Indeed, we achieve a reduced total estimation error by imposing that the resulting systematic error is smaller than the expected statistical error. 
Importantly, this figure of merit can be efficiently calculated given a measurement scheme that generates a list of measurement settings.
If a measurement scheme samples the settings, we average the figure of merit over $N_\mathrm{runs}=100$ repetitions. 

We show the empirical quality measure for the various methods exemplarily for \ce{NH3} (using the \ac{JW} encoding) in \cref{fig:benchmark}.
This measure typically decreases following a power law of the form
\begin{equation}
	\epsilon(N) = \frac{A}{N^c}\,,
\label{eq:fit_function}
\end{equation}
as indicated by dotted lines underlining the data points.
The same qualitative behavior is also observed for the \ac{BK} and the Parity mapping which we opted to shift to \Scref{fig:fit_extrapolations} for ease of presentation.
The various values for the fit parameter $c$ in \cref{eq:fit_function} are also provided for all considered molecules and fermion-to-qubit mappings, indicating a robust applicability to a wide range of system sizes $n$.
The corresponding plots for the comparison of guarantees have also been shifted to \Scref{fig:benchmark_guarantee}.
\begin{figure*}[t]
	\centering
	\include{figures/reaching_chem_acc}
	\caption{%
	Reliably reaching chemical accuracy $\chemicalaccuracy$ using ShadowGrouping.
	\textbf{Left}: extrapolated number of measurements $\Nchemicalaccuracy$ to reach chemical accuracy $\chemicalaccuracy=1.6$ mHa according to \cref{eq:fit_function} (with uncertainties following \cref{eq:fit_func_uncertainty}) fitted to the corresponding data from \cref{fig:benchmark} for the \ac{JW}-encoding.
	Only ShadowGrouping and AEQuO reliably yield competitive results (we do not show data if $\Nchemicalaccuracy > 10^8$ or if the fitting procedure did not converge) for all considered instances.
	\textbf{Right}: for each molecule and the three fermion-to-qubit mappings in our benchmark, we have increased the total measurement budget $N$ well above the predictions from the extrapolation (respective values for $N$ given in parentheses within the plot) and employed ShadowGrouping to find suitable measurement settings.
	The colored bars indicate the average accuracy (given in terms of $\chemicalaccuracy$) over $100$ independent repetitions.
	They are superimposed with the corresponding prediction (including its uncertainty~\eqref{eq:epsilon_extrapolation_uncertainty}) in terms of the gray boxes.
	In all considered cases, chemical accuracy (dashed line) is reliably reached and the validity of the extrapolation verified. 
	\label{fig:reaching_chem_acc}
	}
\end{figure*}
Therein, Overlapped Grouping and ShadowGrouping yield the best results.
This is not entirely reflected by looking at the corresponding \ac{RMSE}:
out of the two, only ShadowGrouping yields competitive results for all problem instances.
For some of the methods (Derandomization and Adaptive Paulis), increasing the measurement budget $N$ does not even improve the quality of the energy estimation.
By further investigating these methods, we find that they typically solely focus on the most dominant terms in the Pauli decomposition which leads to sufficient accuracies related to small measurement budgets.
However, they almost completely omit the other terms which are required to improve the measurement accuracy further.
For chemical use cases where one frequently requires at least chemical accuracy of $\chemicalaccuracy \approx 1.6$ mHa~\cite{Liu2022ProspectsofQuantumComputing}, these methods cannot be expected to deliver sufficiently accurate estimations with reasonable measurement effort.
This situation is most severe for Derandomization where usually the same measurement setting is generated over and over again.
This effect can be mitigated to some extent by appropriately tuning its hyperparameter, but the underlying issue prevails, requiring a case-by-case optimization.

In order to provide a comprehensive comparison of these methods for several molecules, we repeat the fitting procedure of \ce{NH3} for the other molecules and fermion-to-qubit mappings in our benchmark.
Since we are interested in measuring the energy with an accuracy of at least $\chemicalaccuracy$, we extrapolate the fits to this accuracy level.
The resulting total number of measurement rounds to yield $\chemicalaccuracy$ is
\begin{equation}
\begin{aligned}
	N &= \( \frac{A}{\epsilon} \)^{1/c} \\
	\Delta N &= \abs*{\frac{\partial N}{\partial A}} \Delta A + \abs*{\frac{\partial N}{\partial c}} \Delta c \\
	&= \frac{N}{c}\frac{\Delta A}{A} + N\log(N)\frac{\Delta c}{c}\,,
\end{aligned}
\label{eq:fit_func_uncertainty}
\end{equation}
where the uncertainty $\Delta N$ has been propagated from the uncertainties of the fit parameters.
We present the values obtained for the \ac{JW}-encoding from fitting the respective \acp{RMSE} in \cref{fig:reaching_chem_acc}.
Over this range of small molecules, only ShadowGrouping and AEQuO yield a reliable and competitive result.
We find that for the smaller problem instances Overlapped Grouping yields competitive results but crucially performs worse at instances of larger system sizes.
In particular, for the largest system size in our benchmark, i.e., for \ce{NH3}, only AEQuO and ShadowGrouping yield reasonable results at all.
Here, ShadowGrouping improves upon AEQuO by roughly an order of magnitude.
Since the latter includes empirical covariance information to guide the grouping procedure, focussing solely on the coefficients of the observables and their respective commutation relations seems as the striking feature.
In contrast to grouping schemes, ShadowGrouping does not impose a fixed clique covering but can implicitly pick from the set of all the suitable cliques by virtue of our tail bound (while never explicitly constructing it).
In fact, we have tried to include the empirical variance information into the weighting function $\wt$ of ShadowGrouping, but this did not change the generated measurement settings.

Finally, we investigate whether our extrapolated measurement budgets actually ensure a reliable measurement procedure to within $\chemicalaccuracy$.
To this end, we increase the respective extrapolated measurement budgets roughly by a factor of three and run ShadowGrouping to generate suitable measurement settings.
We then carry out the measurement to yield an estimate $\hat E$ and repeat for $N_\mathrm{runs} = 100$ independent times.
The resulting accuracies for each molecule and fermion-to-qubit mapping are presented in \cref{fig:reaching_chem_acc} as colored bars (indicating the standard deviation) around black dots representing the average accuracy.
We superimpose these bars with the respective predictions (including uncertainties
\begin{equation}
	\Delta \epsilon = \frac{\Delta A}{N^c} + \epsilon \log N \Delta c
	\label{eq:epsilon_extrapolation_uncertainty}
\end{equation}
propagated from the fit procedure) obtained from the extrapolation as gray boxes.
In all cases, we reliably reach the predicted accuracy.
Most importantly, we obtain chemical accuracy in all considered cases.
As we show in the Methods \cref{subsec:comparison_other_states}, we obtain similar accuracy levels for quantum states beyond the ground state as well, rendering ShadowGrouping an efficient protocol for the energy estimation task of arbitrary quantum states.
%
\section{Discussion}
\label{sec:discussion}
%
Our work achieves two things:
first and foremost, we provide rigorous sampling complexity upper bounds for current state-of-the-art readout schemes applicable to near-term feasible quantum experiments, required for the direct energy estimation of quantum many-body Hamiltonians.
This marks a shift in paradigm because it enables to formulate unconditional readout guarantees that do not rely on the assumption of having a sufficient proxy of the estimator's variance.
Secondly, we propose an efficient and straightforward measurement protocol. 
It is immediately applicable to \acp{VQA}, where the measurement effort is a critical bottleneck. 
Moreover, our efficient readout strategy is also promising for analog quantum simulators. 
More generally, it applies to any experiment where a direct measurement of a quantum many-body observable is needed.
Technically, our protocols rely on assigning each of the contributions of the Hamiltonian~\eqref{eq:Hamiltonian} a corresponding weight, which we obtain from probability-theoretic considerations.
To this end, we have derived an upper bound to the probability in \cref{theorem:Bernstein} that a given empirical estimator fails to yield an $\epsilon$-accurate value for the compound target observable such as the quantum state's energy.
This readily provides worst-case ranges for the sample complexity which is useful, for example, in order to appraise the feasibility of employing quantum devices to quantum chemistry problems where accuracy is most crucial~\cite{Liu2022ProspectsofQuantumComputing}.
\\

\noindent Finally, there are several promising further research directions:
\begin{itemize}[leftmargin=*,noitemsep,topsep=0pt,parsep=0pt,partopsep=0pt]
\item Investigating the interplay of how the compounds of an observable contribute to the sample complexity, likely based on power-mean inequalities~\cite{Bullen03_means}, is useful for obtaining tighter bounds for it.
\item We do not impose further information on the prepared quantum state $\rho$ -- this, however, can be used to improve the sample complexity, e.g., when considering pure states~\cite{Grier2022PureClassicalShadows_published} or to incorporate available prior variance estimates~\cite{yen2022AllocationWithProxy}.
\item We have provided guarantees for the energy measurement of a single state.
Yet, in \acp{VQA} the gradient, i.e., the calculation of the joint energy of multiple states, also has to be calculated with high precision to produce a promising trial state in the first place.
Comparing our results to alternative guarantees devised for this problem setting such as Refs.~\cite{Kubler2020adaptiveoptimizer,Menickelly2023latency} may be useful to adapt ShadowGrouping to it.
\item Our tail bound~\eqref{eq:vector_bernstein} does not rely on an independent sampling procedure but is compatible with popular grouping schemes.
However, in the proof we completely discard any information on the actual empirical variances of the estimates.
Refining the upper bound such that it takes into account the empirical variances simultaneously obtained is an exciting question for further investigation as the numerical benchmarks suggest that the grouped mean estimator yields more accurate estimates compared to other estimators.
Since ShadowGrouping finds the measurement settings sequentially, the algorithm could easily benefit from such an empirical tail bound as the measurement outcomes can readily be fed back to it.
Moreover, this recurs when estimating the covariances of grouped observables to further assess which observables are suited to be measured jointly.
\item As ShadowGrouping appears to efficiently provide state-of-the-art groupings based on \ac{QWC} (see \cref{fig:benchmark}), a numerical benchmark for general commutativity relations (or relations tailored to the hardware constraints~\cite{Miller2022hardware_tailored}) is straightforward and enticing for deeper measurement circuits, appearing to rival \ac{QWC} even in the presence of increased noise~\cite{Bansingh2022QWCvsGC}.
This decreases the measurement overhead efficiently, especially important for larger system sizes where the number of terms in the decomposition increases rapidly.
\item Currently, we are developing a fermionic version of ShadowGrouping in order to make it more amenable to applications from quantum chemistry.
\end{itemize}

During the completion of this manuscript, another state-of-the-art scheme has been presented in Ref.~\cite{zhang2023composite}. 
The numerical benchmark shows that ShadowGrouping is similarly accurate while being computationally more efficient.
%
\section{Methods}
\label{sec:methods}
This section provides further background information on classical shadows that yield the energy estimator~\eqref{eq:energy_estimator} as well as details for replicating the numerical benchmark.
We also provide proof sketches for \cref{theorem:Bernstein,theorem:hardness} but refer to the \supplement\ for detailed proofs.
Lastly, we conclude with further details on our algorithm ShadowGrouping such as the motivation for our choice of the weight function.
This also includes an examination of our results in light of prior ideas presented in Ref.~\cite{Huang21EfficientEstimationOf} and a comparison to a conceptionally easier single-shot estimator.
%
\subsection{Classical shadows}
%
The framework of classical shadows allows us to rewrite the expectation value $o = \Tr[O\rho]$ which we want to estimate in terms of those random variables accessible in the experiment~\cite{Huang2019PredictingFeatures,Huang2020Predicting}.
To this end, consider any measurement setting $Q \in \mc{P}^n$ that is \ac{QWC}-compatible with $O$, see \cref{def:compatability}.
Given a state $\rho$, this produces a bit string $\vec b \in \Set{\pm 1}^n$ with probability $\PP[\vec b \mid Q, \rho]$.
These bit strings contain information about the target observable $O$ as it is compatible with $Q$. 
Concisely, we have that
\begin{equation}
	o = \expectationover{\vec b} \prod_{i: O_i\neq\1} b_i = \sum_{\vec b \in \Set{\pm 1}^n} \PP[\vec b\mid Q, \rho] \prod_{i: O_i\neq\1} b_i\,. 
\end{equation}
Using Monte-Carlo sampling, this expectation value is estimated by the empirical estimator
\begin{equation}
	\hat{o} = \frac{1}{N} \sum_{j=1}^N \prod_{i: O_i\neq\1} \hat{b}_i^{(j)}
	\label{eq:shadow_empirical_mean}
\end{equation}
with $\hat{b}^{(j)}$ being the $j$-th bit string outcome of measuring with setting $Q$.
This assumes that we have at least $N \geq 1$ compatible measurement settings with the target observable. 
If not, we can set the estimator equal to $0$ and introduce a constant systematic error of at most
\begin{equation}
	\epsilon_\mathrm{syst}^{(O)} = \abs{h_O}\,,
	\label{eq:systematic_error_def}
\end{equation}
where $h_O$ may be a corresponding coefficient, e.g., as in \cref{eq:Hamiltonian}.
Since \cref{eq:shadow_empirical_mean} only includes the qubits that fall into the support of $O$, we are not restricted to a single choice of the measurement setting $Q$ as long as $O$ is not fully supported on all qubits. 
In fact, any measurement setting that is compatible with $O$ is suited for the estimation.
Randomized measurements exploit this fact.
Strikingly, these random settings come equipped with rigorous sample complexity bounds.
For instance, using single-qubit Clifford circuits for the readout we require a measurement budget of
\begin{equation}
	N = \LandauO \( \frac{\log(M/\delta)}{\epsilon^2} \max_i 3^{k_i}\)
	\label{eq:sample_complexity_efficient}
\end{equation}
to ensure a collection of observables $(O^{(i)})_i$ to obey
\begin{equation}
	\label{eq:epsilon_closeness}
	\abs*{\hat{o}^{(i)} - o^{(i)}} \leq \epsilon\quad \forall\ i\in[M]
\end{equation}
with confidence at least $1-\delta$, where $k_i$ is the weight of the observable $O^{(i)}$, i.e., its number of non-identity single-qubit Pauli operators~\cite{Huang21EfficientEstimationOf}.
We provide further information on extensions of this method in the \supplement.
%
\subsection{Proof sketch of \texorpdfstring{\cref{theorem:Bernstein}}{Theorem \ref{theorem:Bernstein}}}
\label{subsec:proof_energy_inconfidence_bound}
%
In order to derive the energy estimation inconfidence bound, we first prove a useful intermediate result which may be of independent interest: a Bernstein inequality for random variables in a Banach space. 
For this purpose, we extend inequalities from Refs.~\cite{Gro11,Ledoux1991}.
In particular, we explicitly extend the vector Bernstein inequality of Ref.~\cite[Theorem 12]{Gro11} to random variables taking values in separable Banach spaces following Ref.~\cite{Ledoux1991}.
We call them $B$-valued random variables henceforth.
Then, we apply it to random vectors equipped with the $1$-norm.
A suitable construction of these random vectors finishes the proof of the theorem.

We start by defining $B$-valued random variables following Ref.~\cite[Chapter 2.1]{Ledoux1991}:
\begin{definition}[B-valued random variables]
Let $B$ be a separable Banach space (such as $\RR^n$) and $\Bnorm{\argdot}$ its associated norm.
The open sets of $B$ generate its Borel $\sigma$-algebra.
We call a Borel measurable map $X$ from some probability space $(\Omega,\mc{A},\PP)$ to $B$ a \emph{B-valued random variable}, i.e., taking values on $B$.
\end{definition}
In the \supplement, we show that the norm of the sum of $B$-valued random variables concentrates exponentially around its expectation if we have some information about the variances of the random variables~\cite{Jurinskii1974}.
We summarize this finding in the following.
\begin{theorem}[B-valued Bernstein inequality]
\label{theorem:Bernstein_banach}
Let $X_1, \dots, X_N$ be independent B-valued random variables in a Banach space $(B,\Bnorm{\argdot})$ and
$ S \coloneqq \sum_{i=1}^N X_i$.
Furthermore, define the variance quantities
$\sigma_i^2 \coloneqq \EE[\Bnorm{X_i}^2]$, 
$V \coloneqq  \sum_{i=1}^N \sigma_i^2$, and 
$V_B \coloneqq (\sum_{i=1}^N \sigma_i)^2$. 
Then, for all $t \leq V/(\max_{i\in [N]} \Bnorm{X_i})$,
\begin{equation}
	\PP\left[ \Bnorm{S} \geq \sqrt{V_B} + t \right] \leq \exp\(-\frac{t^2}{4V}\). 
	\label{eq:Banach_bernstein}
\end{equation}
\end{theorem}
As an important corollary, we find that for the Banach space $B=\RR^d$ equipped with the $p$-norm ($\Bnorm{\argdot} \equiv \lpnorm{\argdot}$) with $p\in[1,2]$~\cite{random_online_book} we can tighten the value of $\sqrt{V_B}$ in \cref{eq:Banach_bernstein}:

\begin{corollary}[Vector Bernstein inequality]
\label{corollary:Bernstein_vector}
Let $X_1, \dots, X_N$ be independent, zero-mean random vectors in $(\RR^d,\lpnorm{})$,
$ S = \sum_{i=1}^N X_i $, and 
$p\in[1,2]$. 
Furthermore, define the variance quantities $\sigma_i^2 \coloneqq \EE[\lpnorm{X_i}^2]$ and
$V \coloneqq \sum_{i=1}^N \sigma_i^2$. 
Then, for all $t \leq V/(\max_{i\in [N]} \lpnorm{X_i})$,
\begin{equation}
	\PP\left[ \lpnorm{S} \geq \sqrt{V} + t \right] \leq \exp\(-\frac{t^2}{4V}\).
	\label{eq:vector_bernstein_general_gross}
\end{equation}
\end{corollary}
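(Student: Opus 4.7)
My plan is to derive \cref{corollary:Bernstein_vector} as a direct specialization of \cref{theorem:Bernstein_banach} to the Banach space $B=\RR^d$ equipped with the $\ell_p$-norm. The first thing I would do is locate the exact place where the quantity $\sqrt{V_B}$ enters \cref{eq:Banach_bernstein}. Reading through the proof of \cref{theorem:Bernstein_banach}, this quantity arises from a single step: a crude upper bound on the mean $\EE[\Bnorm{S}]$ obtained by combining the triangle inequality with Jensen's inequality, giving $\EE[\Bnorm{S}] \leq \sum_i \EE[\Bnorm{X_i}] \leq \sum_i \sigma_i = \sqrt{V_B}$. Everything after this step—the concentration of the Lipschitz functional $\Bnorm{S}$ around its mean, driven by the variance parameter $V$—is agnostic to the concrete form of the mean estimate. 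Consequently, the whole task reduces to proving the sharper bound $\EE[\lpnorm{S}] \leq \sqrt{V}$ under the extra hypothesis $p\in[1,2]$ and then re-running the same concentration step with this improved centre.

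For the Hilbert case $p=2$ the improvement is immediate: by independence and $\EE[X_i]=0$, the parallelogram identity yields
\begin{equation*}
\EE[\lpnorm[2]{S}^2] = \sum_{i=1}^N \EE[\lpnorm[2]{X_i}^2] = V,
\end{equation*}
and Jensen's inequality applied to the concave map $t\mapsto\sqrt{t}$ gives $\EE[\lpnorm[2]{S}] \leq \sqrt{V}$. For the interior range $p\in[1,2)$, my plan is to exploit the concavity of $t\mapsto t^{p/2}$ on $[0,\infty)$: writing $\lpnorm{S}^p = \sum_j |S_j|^p$, coordinate-wise Jensen together with independence and the zero-mean hypothesis bounds each $\EE[|S_j|^p]$ in terms of the second moment $\sum_i \EE[(X_i)_j^2]$. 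Re-assembling these coordinate estimates into a bound on $\EE[\lpnorm{S}]$ via the concavity of $t\mapsto t^{1/p}$ and the inequalities comparing $\ell_p$- and $\ell_2$-norms on $\RR^d$ that are available in the range $p\in[1,2]$ should yield the target estimate $\EE[\lpnorm{S}]\leq\sqrt{V}$. Plugging this improved mean bound into the concentration step of \cref{theorem:Bernstein_banach} then produces \cref{eq:vector_bernstein_general_gross} on the stated validity range $t\leq V/\max_i\lpnorm{X_i}$.

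The hard part, I expect, is the case $p\in[1,2)$. The space $\ell_p^d$ is not of type $2$ with a dimension-independent constant, so a careless reduction to an $\ell_2$ statement—for instance via the embedding $\lpnorm{x}\leq d^{1/p-1/2}\lpnorm[2]{x}$—would introduce an unwanted factor $d^{1/p-1/2}$ that would spoil the bound. Avoiding this dimensional loss forces one to use the independence and zero-mean hypotheses jointly at exactly the right moment, rather than reducing to the Hilbert case up front. I therefore expect the careful interpolation between the clean Hilbert-space identity at $p=2$ and the triangle-inequality bound that governs the general Banach case to be the most delicate ingredient of the argument.
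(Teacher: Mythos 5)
Your reduction of the corollary to the single estimate $\EE[\lpnorm{S}]\leq\sqrt{V}$ is exactly the structure of the paper's proof, and your $p=2$ argument (independence plus Pythagoras, then Jensen) coincides with the paper's. The gap is in the case $p\in[1,2)$, and it is not a fillable one: the intermediate inequality $\EE[\lpnorm{S}]\leq\sqrt{V}$ is \emph{false} there, so no completion of your coordinate-wise plan can succeed. Take $d=N=2$, $X_1=(\epsilon_1,0)$, $X_2=(0,\epsilon_2)$ with independent Rademacher signs. Then $S=(\epsilon_1,\epsilon_2)$, so $\lpnorm[1]{S}=2$ deterministically, while $\sigma_1^2=\sigma_2^2=\EE[\lpnorm[1]{X_i}^2]=1$ gives $\sqrt{V}=\sqrt{2}$; hence $\EE[\lpnorm[1]{S}]=2>\sqrt{V}$, and the full statement \cref{eq:vector_bernstein_general_gross} itself fails at $p=1$: with $t=1/2\leq V/\max_i\lpnorm[1]{X_i}=2$ the left-hand side equals $1$ while the right-hand side is $\e^{-1/32}<1$. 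Concretely, your chain of coordinate-wise Jensen steps lands at $\EE[\lpnorm{S}]\leq\bigl(\sum_j \bigl(\sum_i \EE[(X_i)_j^2]\bigr)^{p/2}\bigr)^{1/p}$, i.e.\ the $\ell_p$-norm of the vector of coordinate standard deviations; for $p<2$ this dominates the corresponding $\ell_2$-norm and can exceed $\sqrt{V}$ by exactly the factor $d^{1/p-1/2}$ you were hoping to avoid. The type-$2$ obstruction you flagged is therefore not a technical nuisance to be circumvented by ordering the steps cleverly; it is the reason the claimed mean bound breaks.

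For what it is worth, the paper's own treatment of $p<2$ has the same defect: it invokes Lyapunov's inequality to write $\EE[\lpnorm{S}]\leq\pnorm[L^p(\RR^d)]{S}\leq\pnorm[L^2(\RR^d)]{S}=\sqrt{\EE[\lpnorm[2]{S}^2]}$, but the middle step silently switches the underlying vector norm from $\ell_p$ to $\ell_2$, and for $p\in[1,2)$ that comparison points the wrong way, since $\lpnorm{x}\geq\lpnorm[2]{x}$ there. Both your argument and the paper's are complete only at $p=2$; for $p<2$ one must either retain the larger centring $\sqrt{V_B}=\sum_i\sigma_i$ from \cref{theorem:Bernstein_banach} or accept a dimension-dependent (type-$2$) constant in front of $\sqrt{V}$.
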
 
This corollary includes the edge case of $p=2$ proven in Ref.~\cite[Theorem 12]{Gro11}.
Our main finding, \cref{theorem:Bernstein}, follows by a suitable choice of random vectors.
While the proof is included in the \supplement, we want to comment on its implications.
\begin{remark}
The resulting tail bound~\eqref{eq:vector_bernstein} keeps a balance between the magnitude of the coefficients $h_i$ of each observable and how often they have been measured, respectively.
In the following section, we show how to turn this insight into the weight function $\wt$.
\end{remark}
\begin{remark}
Due to the inherent commutation relations in \cref{eq:Hamiltonian}, the dependence of the tail bound~\eqref{eq:vector_bernstein} on the $N_i$ necessarily becomes slightly convoluted.
In fact, we can identify $\lpnorm[1]{\vec{h'}}^{-2}$ to be proportional to a weighted power mean of power $r=-1/2$ where the mean runs over $(N_i)_i$ with weights $(\abs{h_i})_i$.
Similarly, $\lpnorm[1]{\vec{h''}}$ is inversely proportional to the mean with $r=-1$.
Some of these means' properties and relations to other means can be found, e.g., in Ref.~\cite{Bullen03_means}.
\end{remark}

To make this more precise, the \emph{weighted power mean} of $\vec x\in \RR^d$ with weights $\vec w\in \RR^d_{\geq 0}$ and power $r$ is defined as 
\begin{equation}
	M_r(\vec{x}\mid \vec{w})
	\coloneqq 
	\left( \frac{\sum_{i=1}^d w_i x_i^r}{\sum_{i=1}^d w_i} \right)^{1/r}\, .
\label{eq:power_mean_definition}
\end{equation}
For non-negative $\vec{x},\vec{w}$, $M_r$ is monotonously increasing with $r$.

Now, we set $w_i = \abs{h_i}$ and $x_i = N_i$.
Thus, we have that
\begin{align}
	M_{-1/2}(\vec x\mid\vec w)
	&=
	\frac{\lpnorm[1]{\vec{h}}^2}{\lpnorm[1]{\vec{h'}}^2} \geq \frac{\lpnorm[2]{\vec h}^2}{\lpnorm[1]{\vec{h'}}^2}
	\\
	M_{1/2}(\vec x\mid\vec w)
	&= \left( \frac{\sum_i \abs{h_i}\sqrt{N_i}}{\lpnorm[1]{\vec{h}}} \right)^2
	\nonumber \\
	&\leq
	\frac{\lpnorm[2]{\vec{h}}^2}{\lpnorm[1]{\vec{h}}^2} \sum_i N_i
\end{align}
due to Cauchy's inequality.
Then, by $M_r$ monotonously increasing with $r$,
\begin{equation}
	\lpnorm[1]{\vec{h'}} \geq \frac{\lpnorm[1]{\vec{h}}}{\sqrt{\sum_i N_i}}
\end{equation}
follows as a lower bound.
In the limiting case of non-commuting observables, this bound reduces to $\lpnorm[1]{\vec{h}}/\sqrt{N}$ where $N$ is the total shot number.
The converse, i.e., extracting upper bounds is not as straightforward to do.
%
\subsection{Finding an equivalent weight function for the energy estimation inconfidence bound}
\label{subsec:Bernstein_weight_function}
%
In this section, we find an equivalent weight function $\wt$ for our tail bound~\eqref{eq:vector_bernstein} that can be used to assess each observable's individual contribution to the total bound.
As of now, the current bound depends on all contributions jointly.
However, the $1$-norm in \cref{eq:vector_bernstein} makes a subdivision into the individual contributions possible.
To this end, we inspect the bound further.
We start with a further upper bound~\cite[Theorem 2.4]{Candes2011_bernstein} to \cref{eq:vector_bernstein} to get rid of the mixed terms in the square and conclude that
\begin{equation}
\begin{aligned}
	\PP \left[ \abs{\hat{E}-E} \geq \epsilon \right]
	&\leq
	\exp\left(-\frac{1}{4} \left[\frac{\epsilon}{2\norm{\Vec{h'}}_{\ell_1}}-1\right]^2 \right)
	\\
	&\leq
	\exp\left(-\frac{\epsilon^2}{32\norm{\Vec{h'}}_{\ell_1}^2} + \frac{1}{4} \right)
	\\
	&<
	1.3 \exp\left(-\frac{\epsilon^2}{32\norm{\Vec{h'}}_{\ell_1}^2} \right)
\end{aligned}
\label{eq:further_upper_bound}
\end{equation}
is to be minimized the most when trying to minimize the failure probability.
Since the argument is monotonously increasing with $\lpnorm[1]{\vec{h'}}$, this is equivalent to minimizing $\norm{\Vec{h'}}_{\ell_1} = \sum_i \abs{h_i}/\sqrt{N_i}$.
In order to decrease the sum the most we need to find a measurement setting such that the summands change the most.
If we define
\begin{equation}
\begin{aligned}
	w_i &\coloneqq \abs{h_i} \( \frac{1}{\sqrt{N_i}} - \frac{1}{\sqrt{N_i+1}} \) \\
	&= \abs{h_i}\frac{\sqrt{N_i+1}-\sqrt{N_i}}{\sqrt{N_i(N_i+1)}} > 0\,,
\end{aligned}
\label{eq:target_Bernstein_bound}
\end{equation}
since $N_i \geq 1$ as per \cref{theorem:Bernstein}, the optimization boils down to maximizing $\sum_i w_i > 0$.
Therefore, we come back to a form of the objective function where the arguments of the sum serve as the individual weights for each of the observables in the Hamiltonian.
As a consequence, we can readily provide the weights to ShadowGrouping as sketched in \cref{fig:algorithm_idea}.
There is one caveat left: \cref{theorem:Bernstein} does not hold if any of the Pauli observables has no compatible measurements, i.e., in case of $N_i = 0$ for some $i$.
In this case, \cref{eq:target_Bernstein_bound} is ill-defined. 
Using the fact that $w_i \leq \abs{h_i}\ \forall i$, we can numerically rectify the issue by setting
\begin{equation}
	w_i = \alpha \abs{h_i} \quad\text{if}\quad N_i = 0
	\label{eq:rectified_weights}
\end{equation}
with some hyperparameter $\alpha \geq 1$ that balances the immediate relevance of terms that have no compatible measurement setting yet with those that do but are of larger magnitude $\abs{h_i}$ in the Hamiltonian decomposition.
Repeated rounds eventually lead to all Pauli observables having at least one compatible measurement setting such that we can evaluate \cref{eq:vector_bernstein}.
In our numerics, we choose $\alpha$ such that observables with no compatible measurement setting yet are always preferred over the ones that do.
Setting
\begin{equation}
\begin{aligned}
	h_\mathrm{min} &\coloneqq \min_i \abs{h_i} > 0 \\
	h_\mathrm{max} &\coloneqq \max_i \abs{h_i} > 0 \, ,\\
\end{aligned}
\end{equation}
we find $\alpha$ to be at least
\begin{equation}
	\alpha > \frac{h_\mathrm{max}}{h_\mathrm{min}} \geq 1\,.
\end{equation}
We use $\alpha = h_\mathrm{max}/h_\mathrm{min} + h_\mathrm{min}$ throughout all numerical experiments.
This way, we scan through all observables at least once to rank them according to \cref{eq:target_Bernstein_bound} and find suitable settings afterwards.
This overhead introduced can be mitigated by applying the truncation criterion, \cref{corollary:truncation_criterion} in the \supplement, and running ShadowGrouping on the smaller observable set again.
This combination is computationally efficient (it roughly doubles the computational overhead) and ensures that only the observables of statistical relevance to the tail bound are considered in the first place.
%
\subsection{Comparison with Derandomization}
\label{subsec:comparison_derandomization}
%
While our tail bound, \cref{theorem:Bernstein}, is the first derived upper bound for the energy estimation of quantum many-body Hamiltonians with currently feasible readout schemes, there is pioneering work in this direction in Ref.~\cite{Huang21EfficientEstimationOf}.
Here, a tail bound is found by means of Hoeffding's inequality that at least one unweighted, \emph{single} Pauli observable in a given collection deviates substantially from its mean.
This situation is somewhat related to the task of estimating the energy by summation of many Pauli observables but discards the different weights in the Pauli decomposition as well as their respective interplay as each of the observables are treated independently of the other.
The authors of Ref.~\cite{Huang21EfficientEstimationOf} reintroduce the weights in an ad-hoc manner.
Hence, we refer to it as the Derandomization bound which originally reads as
\begin{equation}
	\text{DERAND}_i^\mathrm{(orig)} \coloneqq 2\exp\(-\frac{\epsilon^2}{2} N_i\ \frac{\max_j \abs{h_j}}{\abs{h_i}}\)\,.
	\label{eq:derandom_bound_original}
\end{equation}
Here, $\epsilon$ is again the accuracy, $N_i$ counts the number of previous compatible measurement settings and the $h_i$ come from \cref{eq:Hamiltonian}.
Taking into account the weights $h_i$ in an ad-hoc fashion, however, renders this expression unsuitable for an actual upper bound.
We rectify this by shifting the parameter $O \mapsto hO \eqqcolon \tilde{O}$ by some value $h \neq 0$.
Hoeffding's inequality implies that
\begin{equation}
	\PP[\abs{\hat{\tilde{o}}-\tilde{o}} \geq \epsilon ] \leq 2 \exp\(-\frac{\epsilon^2}{2h^2} N \)\,.
\end{equation}
This ensures that all weighted observables in \cref{eq:Hamiltonian} are treated equally w.r.t.\ the value of $\epsilon$.
Thus, the actual Derandomization bound reads as
\begin{equation}
	\text{DERAND}_i \coloneqq 2\exp\(-\frac{\epsilon^2}{2h_i^2} N_i\).
	\label{eq:derandom_bound_improved}
\end{equation}
Taking a union bound over all observables $O^{(i)}$, we again obtain an upper bound for $\PP[\abs{E-\hat{E}} \geq \epsilon]$.
We see that this derivation leads to a more conservative $\epsilon$-closeness (captured in terms of the $\infty$-distance) compared to the $1$-distance of \cref{theorem:Bernstein}.
Since we treat each observable independently of all the others, the total accuracy of the energy estimation can grow as
\begin{equation}
\begin{aligned}
	\epsilon_\mathrm{eff}
	&\coloneqq
	\abs*{\sum_{i=1}^M \left(h_i\hat{o}^{(i)} - h_i o^{(i)} \right)}
	\\
	&\leq
	\sum_{i=1}^M \underbrace{\abs*{h_i\hat{o}^{(i)} - h_i o^{(i)}}}_{\leq \epsilon} = M \epsilon
\end{aligned}
\end{equation}
via the generalized triangle inequality.
Since in typical scenarios $M = \mathrm{poly}(n)$, this implies that the guarantee parameter $\epsilon$ scales with the number of qubits in order to guarantee $\abs{\hat{E}-E}\leq \epsilon_\mathrm{eff}$, requiring even more measurement settings to compensate for this effect.
We summarize and compare both tail bounds for $\abs{\hat{E}-E} \geq \epsilon$ in \cref{tab:comparison_bounds}.
\begin{table}[t]
\centering
\scalebox{0.99}{
\begin{tabular}{c|cc}
                        & \cref{theorem:Bernstein}  & Derandomization                 \\ \hline
norm                    & $\ell_1$                  & $\ell_\infty$                 \\
$\epsilon_\mathrm{eff}$ & $\epsilon$                & $M\epsilon$   \\
equation                & $\exp\left(-\frac{1}{4} \left[\frac{\epsilon}{2\norm{\Vec{h'}}_{\ell_1}}-1\right]^2 \right)$ & $2\sum_{i=1}^M \exp\(-\frac{\epsilon^2}{2h_i^2} N_i\)$  \\
\multirow{2}{*}{$\wt$}  & \multirow{2}{*}{\cref{eq:target_Bernstein_bound}} & $c^{N_i}-c^{N_i+1}$ \\
                        &                           &  $c=\exp(-\epsilon^2/(2h_i^2))$                    
\end{tabular}
}

\caption{%
Comparison of our tail bound~\eqref{eq:vector_bernstein} with the Derandomization bound~\eqref{eq:derandom_bound_improved} adapted from Ref.~\cite{Huang21EfficientEstimationOf}.
Norm refers to how the error is captured w.r.t.\ the single Pauli terms in \cref{eq:Hamiltonian} whereas $\epsilon_\mathrm{eff}$ refers to the error in terms of the energy estimation.
For \cref{theorem:Bernstein}, this is identical to the guarantee parameter $\epsilon$ while
the Derandomization guarantee scales with the number of qubits $n$.
The difference arises from the fact that we effectively exchange the sum and the exponential function in the corresponding bounds.
The latter are used to derive a weight function $\wt$ for ShadowGrouping, see the previous section.
}
\label{tab:comparison_bounds}
\end{table}

We also compare ShadowGrouping to the Derandomization measurement scheme.
First, ShadowGrouping does not require a qubit ordering as it directly works with the inherent commutation relations in \cref{eq:Hamiltonian}.
The Derandomization algorithm, on the other hand, finds the measurement setting qubit by qubit and thus imposes an ordering of the observables.
As a consequence, the computational complexity of our scheme scales with $\LandauO (n M \log(M))$ for assigning a single measurement setting as we have to order the $M$ weights first in descending order, then go through every target observable comprised of $n$ qubits.
By contrast, the Derandomization procedure scales as $\LandauO(n M)$ as it has to modify all $M$ terms in its corresponding bound after appending a single-qubit Pauli observable to the next measurement setting.
We see that our approach only worsens the scaling by a logarithmic factor but enables the algorithm to find the next measurement setting regardless of the qubit order (the Derandomization procedure always uses the same ordering).
This might help to decrease the inconfidence bound quicker.
Moreover, the Derandomization scheme requires a continuation of the tail bound to the case of having partially assigned the next measurement setting.
This is possible for the Derandomization bound~\cite{Huang21EfficientEstimationOf} but unclear in case of our tail bound.
ShadowGrouping, on the other hand, can be applied to either bound.
%
\subsection{Comparison with single-shot estimator}
\label{subsec:comparison_single_shot}
%
%
\begin{table}[t]
\scalebox{1}{
\begin{tabular}{c|c|cc}
Molecule & \multirow{2}{*}{Enc.} & Random & Single shot \\ 
$E$ [mHa]& & Paulis~\cite{Huang2020Predicting} & \cref{eq:single_shot_definition} \\ \hline
\multicolumn{1}{c|}{\multirow{3}{*}{\shortstack[c]{\ce{H2}\\ -1.86$\times 10^3$}}} & \multicolumn{1}{c|}{JW} & 123$\pm$15 & 360$\pm$40 \\
\multicolumn{1}{c|}{} & \multicolumn{1}{c|}{BK} & 114$\pm$13 & 300$\pm$40 \\
\multicolumn{1}{c|}{} & \multicolumn{1}{c|}{Parity} & 134$\pm$16 & 370$\pm$50 \\ \hline
\multicolumn{1}{c|}{\multirow{3}{*}{\shortstack[c]{\ce{LiH}\\ -8.91$\times 10^3$}}} & \multicolumn{1}{c|}{JW} & 84$\pm$10 & 380$\pm$40 \\
\multicolumn{1}{c|}{} & \multicolumn{1}{c|}{BK} & 92$\pm$10 & 340$\pm$40 \\
\multicolumn{1}{c|}{} & \multicolumn{1}{c|}{Parity} & 97$\pm$12 & 350$\pm$40 \\ \hline
\multicolumn{1}{c|}{\multirow{3}{*}{\shortstack[c]{\ce{BeH2}\\ -19.05$\times 10^3$}}} & \multicolumn{1}{c|}{JW} & 170$\pm$18 & 640$\pm$70 \\
\multicolumn{1}{c|}{} & \multicolumn{1}{c|}{BK} & 158$\pm$22 & 610$\pm$70 \\
\multicolumn{1}{c|}{} & \multicolumn{1}{c|}{Parity} & 130$\pm$16 & 620$\pm$80 \\ \hline
\multicolumn{1}{c|}{\multirow{3}{*}{\shortstack[c]{\ce{H2O}\\ -83.60$\times 10^3$}}} & \multicolumn{1}{c|}{JW} & 320$\pm$40 & 1980$\pm$220 \\
\multicolumn{1}{c|}{} & \multicolumn{1}{c|}{BK} & 430$\pm$50 & 2030$\pm$280 \\
\multicolumn{1}{c|}{} & \multicolumn{1}{c|}{Parity} & 670$\pm$70 & 1980$\pm$240 \\ \hline
\multicolumn{1}{c|}{\multirow{3}{*}{\shortstack[c]{\ce{NH3}\\ -66.88$\times 10^3$}}} & \multicolumn{1}{c|}{JW} & 430$\pm$50 & 2000$\pm$230 \\
\multicolumn{1}{c|}{} & \multicolumn{1}{c|}{BK} & 340$\pm$40 & 2170$\pm$250 \\
\multicolumn{1}{c|}{} & \multicolumn{1}{c|}{Parity} & 470$\pm$50 & 2060$\pm$210 \\ \hline
\multicolumn{1}{c}{ } & \multicolumn{3}{l}{Values above in mHa} \\ \hline
\end{tabular}
}

\caption{%
Accuracy in terms of the \ac{RMSE}-metric~\eqref{eq:rmse} at a measurement budget of $N=1000$.
The mean over $N_\mathrm{runs} = 100$ independent repetitions including its standard deviation are reported.
For reference, the ground-state energies $E$ for each molecule are also provided.
The single-shot estimator defined in \cref{eq:single_shot_definition} does not produce competitive estimates even when benchmarked against random Paulis employed in \cref{fig:benchmark}.
\label{tab:energy_benchmark_single_shot}
}
\end{table}
We employ the Weighted Random Sampling method of Ref.~\cite{arrasmith2020OperatorSampling} to assess the scaling of our measurement guarantee~\eqref{eq:Bernstein_epsilon}.
This estimator simply picks a single observable in the Hamiltonian decomposition with probability $p_i = \abs{h_i}/\lpnorm[1]{\vec{h}}$ and obtains a single-shot estimate.
Hence, we refer to it as the single-shot estimator.
This way, the state's energy can be estimated from a single measurement round.
Importantly, this sampling strategy can be straightforwardly equipped with a guarantee.
Assuming, we have picked the $k$-th observable to be measured, we have
\begin{equation}
\begin{aligned}
	\hat{E} &= s_k \lpnorm[1]{\vec{h}} \\
	s_k &\coloneqq \sign(h_k) \hat{o}^{(k)} \in \Set{\pm 1}\,.
\end{aligned}
\label{eq:single_shot_definition}
\end{equation}
This estimator is unbiased:
\begin{align*}
	\EE[\hat{E}]
	&=
	\sum_{i=1}^M p_i \lpnorm[1]{\vec{h}} \EE[s_i] = \sum_{i=1}^M \abs{h_i} \sign(h_i) \EE[\hat{o}^{(i)}]
	\\
	&=
	\sum_{i=1}^M h_i o^{(i)} = E\,.
\end{align*}
Clearly, $\abs{\hat{E}} \leq \lpnorm[1]{\vec{h}}$.
Invoking Hoeffding's inequality, for $N$ many independent samples, we have that
\begin{equation}
	\PP[\abs{\hat{E}-E} \geq \epsilon] \leq 2 \exp\(-\frac{N\epsilon^2}{2\lpnorm[1]{\vec{h}}^2}\)
\end{equation}
given some $\epsilon > 0$.
We arrive at a sample complexity (with $\delta \in (0,1/2)$) of
\begin{equation}
	N \geq \frac{2\lpnorm[1]{\vec{h}}^2}{\epsilon^2} \log\frac{2}{\delta}
	\label{eq:single_shot_sample_complexity}
\end{equation}
with probability $1 - \delta$ in order for $\abs{\hat{E}-E} \leq \epsilon$.
Solving for $\epsilon$, we compare this guarantee to \cref{eq:Bernstein_epsilon}.
With $\log(2/x) \leq 2\log(1/x)$ for $x\leq 1/2$, we have
\begin{align*}
	\epsilon_\mathrm{single}
	&=
	\sqrt{2} \sqrt{\log\frac{2}{\delta}} \frac{\lpnorm[1]{\vec{h}}}{\sqrt{N}}
	\leq
	2 \sqrt{2\log\frac{1}{\delta}} \sum_{i=1}^M \frac{\abs{h_i}}{\sqrt{N}}
	\\
	&\leq
	2 \sqrt{2\log\frac{1}{\delta}} \sum_{i=1}^M \frac{\abs{h_i}}{\sqrt{N_i}}
	\leq
	\alpha_\delta \lpnorm[1]{\vec{h'}}
	\equiv
	\epsilon_\mathrm{multi}
	\\
	&\Rightarrow
	\tilde{\LandauO}(\epsilon_\mathrm{single})
	=
	\tilde{\LandauO}(\epsilon_\mathrm{multi})\,,
\end{align*}
with $\alpha_\delta$ from \cref{eq:truncation_criterion}, see also the proof of \cref{corollary:truncation_criterion} in the \supplement.
We find that the two guarantees agree up to logarithmic factors in $N$.
Moreover, in case all observables commute with each other, both tail bounds are equivalent up to a constant factor.
However, in the numerical benchmark from \cref{subsec:numerics}, we see that the estimator~\eqref{eq:single_shot_definition} does not fare better than the random Pauli settings, see \cref{tab:energy_benchmark_single_shot}.
We attribute this discrepancy to the fact that the grouped mean estimator bears a lower variance in practice than the single-shot estimator introduced here.
It implies that recycling the measurement outcomes is the more favorable approach and hints towards a possible refinement of our tail bound.
%
\subsection{Energy estimation beyond the ground state}
\label{subsec:comparison_other_states}
%
%
\begin{figure}[b]
	\centering
	\include{figures/other_states}
	\caption{%
	Empirical accuracy of ShadowGrouping for the energy estimation of states beyond the ground state.
	\textbf{Top row}: energy of the thermal state~\eqref{eq:thermal_state} for various values of the inverse temperature $\beta$ (left).
 	Indicated are eight states whose energy has been estimated by ShadowGrouping.
 	The respective average \ac{RMSE} as a function of the total number of measurements is shown for the \ac{JW}-mapping (right).
 	Error bars indicate the standard deviation over $100$ repetitions.
 	\textbf{Bottom row}: energy and estimation accuracy profiles of a depolarized Haar-random state $\rho_p$~\eqref{eq:depolarized_state} with depolarization parameter $p$.
 	The eleven probed states range from a pure state to the maximally-mixed state and from states with structure (the ground state) to states with little structure (Haar-random states).
 	In all cases, the observed accuracy levels are of the same order of magnitude.
 	Moreover, even the arguably worst-case state, i.e., the maximally-mixed state of largest variance, only increases the reached accuracy level by a constant factor of less than three.
	}
	\label{fig:other_states}	
\end{figure}
We investigate ShadowGrouping's estimation capabilities beyond the ground state.
We do so in two separate ways.
In the first instance, we gradually increase the mixedness of the quantum state to be measured by considering the thermal state
\begin{equation}
	\rho_\beta = \frac{e^{-\beta H}}{Z}\,,
	\label{eq:thermal_state}
\end{equation} 
with inverse temperature $\beta$ and partition function $Z$ and the Hamiltonian from \cref{eq:Hamiltonian} for the \ce{H2}-molecule in 6-31G encoding ($8$ qubits).
The ground state is contained in this state class for the limit of $\beta \to \infty$.
On the other hand, the maximally-mixed state $\1/2^n$ is attained for $\beta \to 0$, allowing for a smooth interpolation between the structured pure state and the unstructured mixed one.
We show the energy $E(\beta) = \Tr[H \rho_\beta]$ in \cref{fig:other_states}.
We pick eight different values from the most relevant range for $\beta$ and prepare the measurement settings by ShadowGrouping and with a measurement budget of $N = 10^5$.
Since ShadowGrouping works deterministically, we only need to do this routine once and reuse the resulting settings for any subsequent measurement procedure.
We track the \ac{RMSE}~\eqref{eq:rmse} over $100$ independent measurement repetitions.
Because the mixedness of the quantum state increases its energy variance, we find that the observed estimation error increases slightly with smaller $\beta$.
Nevertheless, the overall error follows the same scaling with $N$ and is only a constant factor of less than three worse.
The plots for the \ac{BK} and Parity mapping qualitatively behave the same as shown in \Scref{fig:other_states_supplements} in the \supplement.

In the second instance, we directly consider quantum states with no inherent structure by drawing them Haar-randomly, i.e., we draw state vectors uniformly from the complex sphere.
To this end, we generate such a Haar-random state $\ket{\psi}$ with energy $E(p=0) = \sandwich{\psi}{H}{\psi}$.
When averaged over all Haar-random states, this value will vanish.
However, we consider estimating the correct energy of a single state for now.
Again, we smoothly interpolate between the pure state and the maximally mixed state by means of global depolarization noise with parameter $p$.
The resulting state thus becomes
\begin{equation}
	\rho_p = (1-p) \ket{\psi}\bra{\psi} + \frac{p}{2^n} \mathds{1}\,.
	\label{eq:depolarized_state}
\end{equation}
We carry out the same analysis as for the thermal state (we use a different Haar-random state for each of the $100$ repetitions) and present the results in \cref{fig:other_states}.
Since Haar-random states do not possess any structure, we find a similar quantitative accuracy level for all probed values of $p$.
We therefore conclude that ShadowGrouping yields a comparable performance (up to a state-dependent variance factor) for arbitrary quantum states, in line with our guarantees of \cref{theorem:Bernstein} that hold uniformly for all quantum states as well.
%
%
\section*{Data availability}
The Hamiltonian decompositions used for the benchmarks in \cref{fig:benchmark,fig:reaching_chem_acc,tab:energy_benchmark_single_shot,fig:other_states} and \Scref{fig:benchmark_guarantee,fig:fit_extrapolations,fig:other_states_supplements} have been sourced from an online repository~\cite{Hamiltonians_git_repo}.
Resulting data (such as \cref{eq:rmse} as a function of $N$ or the measurement settings produced by ShadowGrouping) generated for the benchmarks is stored in Ref.~\cite{ShadowGrouping_git_repo}.
These benchmark data generated in this study have been deposited in a git-repository free of any accession code [\url{https://gitlab.com/GreschAI/shadowgrouping/}].
\section*{Code availability}
All computer code required to reproduce \cref{fig:benchmark,fig:reaching_chem_acc,tab:energy_benchmark_single_shot,fig:other_states} as well as \Scref{fig:truncation_criterion,fig:fit_extrapolations,fig:benchmark_guarantee,fig:other_states_supplements} either from intermediate data or from scratch have been deposited in Ref.~\cite{ShadowGrouping_git_repo}.
%

\begin{acronym}[POVM]
\acro{ACES}{averaged circuit eigenvalue sampling}
\acro{AGF}{average gate fidelity}
\acro{AP}{Arbeitspaket}

\acro{BK}{Bravyi-Kitaev}
\acro{BOG}{binned outcome generation}

\acro{CNF}{conjunctive normal form}
\acro{CP}{completely positive}
\acro{CPT}{completely positive and trace preserving}
\acro{cs}{computer science}
\acro{CS}{compressed sensing} 
\acro{ctrl-VQE}{ctrl-VQE}

\acro{DAQC}{digital-analog quantum computing}
\acro{DD}{dynamical decoupling}
\acro{DFE}{direct fidelity estimation} 
\acro{DFT}{discrete Fourier transform}
\acro{DL}{deep learning}
\acro{DM}{dark matter}

\acro{FFT}{fast Fourier transform}

\acro{GST}{gate set tomography}
\acro{GTM}{gate-independent, time-stationary, Markovian}
\acro{GUE}{Gaussian unitary ensemble}

\acro{HOG}{heavy outcome generation}

\acro{irrep}{irreducible representation}

\acro{JW}{Jordan-Wigner}

\acro{LBCS}{locally-biased classical shadow}
\acro{LDPC}{low density partity check}
\acro{LP}{linear program}

\acro{MAGIC}{magnetic gradient induced coupling}
\acro{MAX-SAT}{maximum satisfiability}
\acro{MBL}{many-body localization}
\acro{MIP}{mixed integer program}
\acro{ML}{machine learning}
\acro{MLE}{maximum likelihood estimation}
\acro{MPO}{matrix product operator}
\acro{MPS}{matrix product state}
\acro{MS}{M{\o}lmer-S{\o}rensen}
\acro{MSE}{mean squared error}
\acro{MUBs}{mutually unbiased bases} 
\acro{mw}{micro wave}

\acro{NISQ}{noisy and intermediate scale quantum}

\acro{ONB}{orthonormal basis}
\acroplural{ONB}[ONBs]{orthonormal bases}

\acro{POVM}{positive operator valued measure}
\acro{PQC}{parametrized quantum circuit}
\acro{PSD}{positive-semidefinite}
\acro{PSR}{parameter shift rule}
\acro{PVM}{projector-valued measure}

\acro{QAOA}{quantum approximate optimization algorithm}
\acro{QC}{quantum computation}
\acro{QEC}{quantum error correction}
\acro{QFT}{quantum Fourier transform}
\acro{QM}{quantum mechanics}
\acro{QML}{quantum machine learning}
\acro{QMT}{measurement tomography}
\acro{QPT}{quantum process tomography}
\acro{QPU}{quantum processing unit}
\acro{QUBO}{quadratic binary optimization}
\acro{QWC}{qubit-wise commutativity}

\acro{RB}{randomized benchmarking}
\acro{RBM}{restricted Boltzmann machine}
\acro{RDM}{reduced density matrix}
\acro{rf}{radio frequency}
\acro{RIC}{restricted isometry constant}
\acro{RIP}{restricted isometry property}
\acro{RMSE}{root mean square error}

\acro{SDP}{semidefinite program}
\acro{SFE}{shadow fidelity estimation}
\acro{SIC}{symmetric, informationally complete}
\acro{SPAM}{state preparation and measurement}
\acro{SPSA}{simultaneous perturbation stochastic approximation}

\acro{TT}{tensor train}
\acro{TM}{Turing machine}
\acro{TV}{total variation}

\acro{VQA}{variational quantum algorithm}
\acro{VQE}{variational quantum eigensolver}

\acro{XEB}{cross-entropy benchmarking}

\end{acronym}

\bibliographystyle{./myapsrev4-2}
\newcommand{\prr}{Phys.\ Rev.\ Research}
\bibliography{mk,new,ag}
\begin{acknowledgments}
We would like to thank the two anonymous referees and, especially, the referee Luca Dellantonio for their thoughtful reports that have helped us to improve our numerical benchmarks and the presentation of the manuscript.
Additonally, we would like to thank Zihao Li and Bruno Murta that have helped us clarifying technical details in our proofs.
This work has been funded by the 
Deutsche Forschungsgemeinschaft (DFG, German Research Foundation) via the Emmy Noether program (Grant No.\ 441423094) (A.G. \& M.K.);  
the German Federal Ministry of Education and Research (BMBF) within the funding program ``Quantum technologies -- From basic research to market'' in the joint project MANIQU (Grant No.\ 13N15578) (A.G. \& M.K.); and by 
the Fujitsu Germany GmbH and Dataport as part of the endowed professorship ``Quantum Inspired and Quantum Optimization'' (A.G. \& M.K.). 
\end{acknowledgments}
\section*{Author contributions}
A.G.\ carried out all calculations and numerical studies, M.K.\ supervised the process and conceived the idea of applying the vector Bernstein inequality. 
Both authors wrote the manuscript together. 
\section*{Competing interests}
The authors declare no competing interest.
%
\section*{\supplement}
\appendix
\renewcommand{\figurename}{Supplementary Figure}
\setcounter{figure}{0}
%
In this \supplement, we provide further details on the two main paradigms for the measurement reduction, namely grouping schemes and shadow methods in the first two sections, respectively. 
Afterwards, we prove our main results: first, the energy estimation tail bound in the third section, and afterwards, a rigorous criterion for when a Hamiltonian truncation is justified in the fourth one. 
Secondly, we follow up with the respective proof of the hardness result. 
We conclude with further numerical results to supplement the numerical benchmark within the respective result’s subsection.
%
\section{Grouping methods}
\label{appendix:grouping}
%
Grouping schemes make use of the fact that in \cref{eq:Hamiltonian}, many of the Pauli operators commute with each other which allows them to be measured simultaneously using the same measurement circuit.
To this end, we want to decompose the operator collection $\{O^{(i)}\}$ into $N_g$ possibly overlapping sets of commuting operators where $N_g$ should be made as small as possible.
This is called grouping of Pauli observables.
This way, we only require $N_g \ll M$ independent measurement settings which reduces the total run-time of the sampling procedure.
In principle, for a system of $n$ qubits, the decomposition in \cref{eq:Hamiltonian} can have up to $M_\mathrm{max} = 4^n -1$ relevant terms each requiring individual readout.
In comparison, this collection can be partitioned into $N_g^\mathrm{min} = 2^n + 1$ equal groups of $2^n -1$ members each, which roughly corresponds to a quadratic reduction in the number of circuit preparations.
However, optimally grouping a given collection referred to as MIN-GROUPING, that is finding the partition with $N_g^\mathrm{min}$, is \NP-hard in the system size~\cite{Wu2021OverlappedGroupingA}.
As a trade-off between the estimation routine's run-time and this costly grouping step, approximate grouping algorithms have been devised, see the Introduction for details.
Lastly, partitioning into commuting groups generally requires measurement circuits consisting of multiple two-qubit gates.
As they increase the noise level of the quantum circuit significantly, we restrict ourselves to single-qubit gates only and, thus, constraining general commutativity to \acf{QWC}.
Due to the tensor structure of the Pauli observables it is easy to see that \ac{QWC} implies general commutativity.
The contrary is not true: $\mc{C}[XX,YY] =$ true but $\mc{C}_\mathrm{QWC}[XX,YY] =$ false.
For grouping operators that fulfill \ac{QWC}, polynomial heuristic algorithms of low degrees are applicable to efficiently find a solution which, however, may not be optimal~\cite{Gokhale2019MinimizingStatePreparations}.
%
\section{Shadow methods}
\label{appendix:shadows}
%
The method of classical shadows is to find a classical approximation $\hat{\rho}$ to a quantum state $\rho$ that reproduces the same expectation values for an ensemble of $M$ observables $\Set{O^{(i)}}\ \forall i \in [M]$ up to some error threshold $\epsilon$~\cite{Huang2019PredictingFeatures,Huang2020Predicting}.
A classical shadow is constructed by rotating the target state $\rho$ via a randomly drawn unitary $U \in \mc{U}$ from a fixed ensemble $\mc{U}$, e.g., local Pauli operators and tensors thereof or Clifford gates.
In consequence, the state transforms as $\rho \mapsto U\rho U^\ad$.
Afterwards, the resulting state is measured in the computational basis which yields a bit string $\vec b \in \{0,1\}^n$ of $n$ bits with probability $\sandwich{\vec b}{U\rho U^\ad}{\vec b}$.
Storing this measurement outcome is efficiently done on a classical computer and only scales linearly with the number of qubits $n$.
The rotation is undone by applying the adjoint of the chosen unitary $U$ onto $\ket{\vec b}$.
This yields the state $U^\ad\ketbra{\vec b}{\vec b}U$.
This procedure can be repeated multiple times, e.g., as often as the measurement budget allows it.
The expectation over all unitaries $U\in\mc{U}$ and the corresponding measurement outcomes $\vec b$ is a Hermitian map $\mc{M}$ for the underlying quantum state $\rho$:
\begin{equation}
	\mc{M}(\rho) = \expectationover{U}\expectationover{\vec b} \left[ U^\ad\ketbra{\vec b}{\vec b}U\right] = \mc{M}^\ad(\rho).
	\label{eq:definition_M}
\end{equation}
Although the result may not be a valid quantum state (it is not necessarily positive-semidefinite), $\mc{M}$ can still be regarded as a quantum channel and inverted.
Since $\mc{M}$ is a linear function, we can recover the state $\rho$ in expectation via
\begin{equation}
	\rho = \expectationover{U}\expectationover{\vec b}\left[ \mc{M}^{-1}\( U^\ad\ketbra{\vec b}{\vec b}U\)\right]
\end{equation}
as long as $\mc{M}$ is tomographically complete.
The completeness ensures invertibility and is defined as $\forall\rho ,\sigma$ with $\rho\neq\sigma\ \exists U\in\mc{U},\ket{\vec b}:\ \sandwich{\vec b}{U\sigma U^\ad}{\vec b} \neq \sandwich{\vec b}{U\rho U^\ad}{\vec b}$.
The right choice of the transformation ensemble $\mc{U}$ ensures that the channel inversion can be applied efficiently by a classical computer.
Surely, averaging over all possible ensemble unitaries and their respective outcomes is not efficiently possible.
However, we are not interested in a full description of $\rho$, but only in its expectation values.
The latter is achievable by taking single snapshots called the classical shadows $\hat{\rho} = \mc{M}^{-1}( U^\ad\ketbra{\vec b}{\vec b}U)$ and computing their empirical mean~\eqref{eq:shadow_empirical_mean}.
This mean concentrates exponentially quickly around $\rho$ (with $\expectation\hat{\rho} = \rho$) and the expectation value of any observable $O$ is recovered in expectation:
\begin{equation}
	o = \Tr[O\rho] = \Tr[O \expectation\hat{\rho}] = \expectation \Tr[O\hat{\rho}] = \expectation\hat{o}.
	\label{eq:exact_in_expectation}
\end{equation}
%
\subsection{\texorpdfstring{\Acl{LBCS}}{Locally-biased classical shadow}}
%
As stated above, classical shadows rely on sampling new measurement settings uniformly at random.
The aim of \acfp{LBCS} is to alter the underlying distribution $\beta$ from which these settings are sampled in order to take the observables' structures, such as their mutual commutativity relations, into account~\cite{Hadfield2020Measurements_published,Hadfield21AdaptivePauliShadows}.
The sampling distribution $\beta$ factorizes over the $n$ qubits as
\begin{equation}
	\beta(Q) = \prod_{i=1}^n \beta_i(Q_i)\,.
\end{equation}
In the unbiased case above, we simply set $\beta_i = 1/3\ \forall i$, i.e., we assign the measurement basis uniformly at random and construct an unbiased estimator for the target observable $O$.
By examining the variance of the estimator, we can bias the $\{ \beta_i \}$ in such a way that the variances are decreased the most while retaining an unbiased estimator.
Optimizing the sampling distribution is done in a preprocessing step.
Additionally, it can readily be adapted to a weighted collection of target observables such as the ones in the Hamiltonian decomposition~\eqref{eq:Hamiltonian}.
Lastly, the sampling distribution can also be altered such that it takes into account a partially assigned measurement setting~\cite{Hadfield21AdaptivePauliShadows}.
%
\subsection{Derandomization}
%
The idea of Derandomization is to greedily select the most advantageous measurement settings as indicated by the current inconfidence bound~\cite{Huang21EfficientEstimationOf}.
As we want to fulfill the $\epsilon$-closeness of \cref{eq:epsilon_closeness} in the $\infty$-norm, we focus on each observable independently.
We can regard each summand $\prod_{i,O_i\neq\1} \hat{b}_i \in \{-1,+1\}$ in \cref{eq:shadow_empirical_mean} as a random variable $s$ with two possible outcomes.
Thus, \cref{eq:shadow_empirical_mean} is only the empirical mean $\hat{s}$ for $s$.
We can invoke Hoeffding's inequality in this case to end up with
\begin{equation}
	\PP\left[ \abs*{\hat{o}^{(i)}-o^{(i)}}\geq \epsilon \right] \leq 2\exp\(-\frac{\epsilon^2}{2} N_i \) ,
	\label{eq:shadow_Hoeffding}
\end{equation}
which is - by construction - valid $\forall i$, $\forall \epsilon > 0$ and $\forall N_i \geq 0$.
Using a union bound over $i \in [M]$, we conclude
\begin{equation}
	\PP\left[\norm{\vec{\hat{o}}-\vec{o}}_{\ell_\infty} \geq \epsilon\right] \leq \sum_{i=1}^M 2\exp\(-\frac{\epsilon^2}{2} N_i \) \equiv \delta
	\label{eq:upper_bound_inconfidence}
\end{equation}
from which a sample complexity~\eqref{eq:sample_complexity_efficient} can be derived in the unweighted case.
Moreover, \Scref{eq:upper_bound_inconfidence} can be extended to cases where we have already partially assigned the next measurement setting~\cite{Huang21EfficientEstimationOf}.
This allows to update the inconfidence bound qubit by qubit.
In this bound, the locality can still implicitly influence the values for $N_i$:
Consider for example $O = X^{\otimes n}$.
The probability of a random measurement setting $Q$ to be compatible with $O$ is exponentially small in the system size $n$: $\PP[[O,Q]=0] = 3^{-n}$.
Thus, we require an exponential number of randomly drawn measurement settings compared to rightly choosing $Q=O$ instead.
Randomly drawn classical shadows perform poorly as they completely disregard any structure in the target observables.
The method of Derandomization aims to rectify this in a greedy approach:
for each next allocation, the expected inconfidence bound is calculated and the Pauli operator is picked that bears the lowest bound.
It has the advantage that it provably achieves a lower inconfidence bound than selecting the measurement basis uniformly at random.
However, it enforces a fixed qubit ordering, and it is unclear how much different permutations affect the performance.
In general, the disadvantage for any greedy algorithm is that optimal performance is not guaranteed and furthermore not probable.
%
\section{Proof of Theorem~\ref{theorem:Bernstein}}
\label{appendix:proof_B_valued_Bernstein}
%
In order to proof the theorem, we have to first take a step back and proof \cref{theorem:Bernstein_banach} and \cref{corollary:Bernstein_vector}, subsequently.
Afterwards, we make the connection to \cref{theorem:Bernstein}.
However, we first need two helpful observations.
The first theorem is proven in Ref.~\cite[Theorem 11]{Gro11} and Ref.~\cite[Chapter 6]{Ledoux1991}.
We use a slightly weaker instantiation of the variance bound for real-valued martingales:
\begin{theorem}[Variance bound for real-valued martingales]
\label{theorem:variance_bound}
Let $X_1, \dots, X_N$ be arbitrary B-valued random variables, s.t.\ $X_i \in L^2(B)$.
Set $\mathbf{X}_i = \{ X_1, \dots, X_i\}$, $\mathbf{X}_0 = \varnothing$.
Let $Z_0 = 0$ and let $Z_1, \dots, Z_N$ be a sequence of real-valued random variables.
Assume the martingale condition
\begin{equation}
	\expectation [Z_i \mid \mathbf{X}_{i-1} ] = Z_{i-1}
	\label{eq:martingale_condition}
\end{equation}
holds for $i=1, \dots, N$.
Assume further that the martingale difference sequence $D_i = Z_i - Z_{i-1}$ respects
\begin{equation}
	\abs{D_i} \leq c_i\,, \quad \abs{\expectation[D_i^2\mid \mathbf{X}_{i-1}]} \leq \sigma_i^2\,.
	\label{eq:variance_bound_condition}
\end{equation}
Then, with $V = \sum_{i=1}^N \sigma_i^2$,
\begin{equation}
	\PP[Z_N > t] \leq \exp\(-\frac{t^2}{4V}\)
	\label{eq:variance_bound}
\end{equation}
for any $t \leq 2V/(\max_i c_i)$.
\end{theorem}
In addition, we also make use of the following lemma, proven in Ref.~\cite[Lemma 6.16]{Ledoux1991}, see also Ref.~\cite{Jurinskii1974}:
\begin{lemma}
	\label{lemma:banach_bounds}
	Let $X_1, \dots, X_N$ be independent B-valued random variables.
	Let $D_i$ as defined in \Scref{eq:martingale_difference}.
	Then, almost surely for every $i\leq N$,
	\begin{equation}
		\abs{D_i} \leq \Bnorm{X_i} + \EE[\Bnorm{X_i}]\,.
	\end{equation}
	Furthermore, if the $X_i$ are in $L^2(B)$ we also have
	\begin{equation}
		\EE[D_i^2\mid \mathbf{X}_{i-1}] \leq \EE[\Bnorm{X_i}^2]\,.
	\end{equation}
\end{lemma}
With this, we are ready for the proof.
\begin{proof}[Proof of \cref{theorem:Bernstein_banach}]
We inherit its concentration inequality from the variance bound of martingales, \cref{theorem:variance_bound}.
As our proof follows the one given in Ref.~\cite{Gro11}, we merely summarize the proof and highlight our contribution to the proof.
The key idea is to approximate the zero-mean random variable $\Bnorm{S} - \expectation [\Bnorm{S}]$ (recall that $S \coloneqq \sum_i X_i$.) by the martingale
\begin{equation}
	Z_i = \expectation \left[ \Bnorm{S} \mid \mathbf{X}_i \right] - \expectation [\Bnorm{S}]\,,
	\label{eq:definition_Z_i}
\end{equation}
where $\mathbf{X}_i \coloneqq \Set{X_1,\dots,X_i}$.
As suggested by \cref{theorem:variance_bound}, we define
\begin{equation}
	D_i = Z_i - Z_{i-1} = \expectation \left[ \Bnorm{S} \mid \mathbf{X}_i \right] - \expectation \left[ \norm{S}	_B \mid \mathbf{X}_{i-1} \right]\,.
	\label{eq:martingale_difference}
\end{equation}
Then, \cref{lemma:banach_bounds} asserts that
\begin{equation}
\begin{aligned}
	\abs{D_i}
	\leq
	2 \max \Bnorm{X_i}
	&\eqqcolon
	c_i
	\\
	\EE[D_i^2\mid \mathbf{X}_{i-1}]
	\leq
	\EE[\Bnorm{X_i}^2]
	&\eqqcolon
	\sigma_i^2\,.
\end{aligned}
\end{equation}
Our contribution now consists of finding an upper bound to the expectation value $\expectation[\Bnorm{S}]$.
Using first the triangle inequality and then Jensen's inequality for $\expectation[Z] \leq \sqrt{\expectation[Z^2]}$, we have that
\begin{equation}
\begin{aligned}
	\EE[\Bnorm{S}]
	&\leq
	\sum_{i=1}^N \EE[\Bnorm{X_i}] \leq \sum_{i=1}^N \sqrt{\EE[\Bnorm{X_i}^2]}
	\\
	&\equiv
	\sum_{i=1}^N \sigma_i \eqqcolon \sqrt{V_B} \,.
\end{aligned}
\end{equation}
Finally, inserting \Scref{eq:definition_Z_i} into \cref{theorem:variance_bound} and shifting $\EE[\Bnorm{S}]$ over yields \cref{theorem:Bernstein_banach}.
\end{proof}
Up to now, we have not restricted ourselves on any particular choice of Banach space $B$.
However, for certain choices of Banach spaces, we can tighten the upper bound on $\EE[\Bnorm{S}]$ further.
This is encapsulated by \cref{corollary:Bernstein_vector}, which we prove in the following.
\begin{proof}[Proof of \cref{corollary:Bernstein_vector}]
Let $1 \leq p < \infty$ for now.
We define the space $L^p(B)$ the space of all B-valued random variables $X$ such that
\begin{equation}
	\EE \norm{X}^p = \int \Bnorm{X}^p\ \mathrm{d}\PP < \infty\ . 
\end{equation}
The spaces $L^p(B)$ are indeed Banach spaces again with the norm $\pnorm[L^p]{\argdot}$ defined as
\begin{equation}
	\pnorm[L^p]{X} = \big( \EE \left[ \Bnorm{X}^p \right]\big)^{1/p}\,.
\end{equation}
For $B=L^2(\RR^d)$, the vector space of $d$-dimensional real vectors equipped with the standard $2$-norm, we can invoke Pythagoras' theorem due to the independence.
Together with the zero-mean property of the $X_i$ (as assumed) we arrive at
\begin{align}
	\expectation[\lpnorm[2]{S}^2]
	&=
	\sum_{i=1}^N \expectation[\lpnorm[2]{X_i}^2] = \sum_{i=1}^N \sigma_i^2 \equiv V
	\\
	\Rightarrow \EE[\lpnorm[2]{S}]
	&\leq
	\sqrt{\EE[\lpnorm[2]{S}^2]} = \sqrt{V} \leq \sqrt{V_B}
	\,,
\end{align}
using Jensen's inequality again and the sub-additivity of the square root in the last step.
This improvement also holds for $B=L^p(\RR^d)$ with $p\in [1,2]$ as
\begin{equation}
\begin{aligned}
	\EE[\lpnorm{S}]
	&\leq
	\pnorm[L^p(\RR^d)]{S} \leq \pnorm[L^2(\RR^d)]{S}
	\\
	&=
	\sqrt{\EE[\lpnorm[2]{S}^2]} = \sqrt{V}\,,
\end{aligned}
\end{equation}
where the first inequality is Jensen's and the second one stems from Lyupanov's inequality~\cite[Theorem 3.11.6]{random_online_book}.
\end{proof}
Finally, we have all the tools required to prove \cref{theorem:Bernstein}:
\begin{proof}[Proof of \cref{theorem:Bernstein}]
We will apply \cref{corollary:Bernstein_vector} to random vectors $X_k \coloneqq \vec{v}_k$ in $\RR^M$ whose construction we explain below.
Setting $V = \sigma^2 \geq \sum_{k=1}^N \EE\lpnorm{\vec{v}_k}^2$ and $b \geq \lpnorm{\vec{v}_k}$, \cref{eq:vector_bernstein_general_gross} asserts to
\begin{equation}
	\PP\left( \lpnormB{\sum_{k=1}^N \Vec{v}_k} \geq \epsilon \right) \leq \exp\left(-\frac{1}{4}\left[\frac{\epsilon}{\sigma}-1\right]^2\right),
	\label{eq:vector_bernstein_general}
\end{equation}
where $\sigma \leq \epsilon \leq \sigma^2/b + \sigma$.
We now construct the random vectors as follows.
Assume a list of $N$ measurement settings $\vec{Q} \in (\mc{P}^n)^N$ such that it contains $N_i(\vec{Q}) > 0$ compatible settings for each target observable $O^{(i)}$.
This allows us to construct the $k$-th random vector $\vec{v}_k$ such that it contains a non-trivial entry for the $i$-th observable if it is compatible with the $k$-th measurement setting and zero otherwise.
In a sense, we embed the vector of measurement results for each observable with which the measurement setting is compatible (w.r.t.\ the respective compatibility measure C) into a larger vector by appropiately adding a zero for all non-compatible observables.
With this construction, we can still make use of correlated samples since entries of random vectors do not need to be independent of each other (which they are not, in general: the zero entries depend on the non-zero ones, both ultimately depend on the chosen measurement setting).
The independence between random vectors, however, is kept as each of the $N$ measurement rounds is independent of the others.
Having this construction idea in mind, each entry of the random vectors consists of the difference of the sampled value from the actual (but unknown) expectation value of the target observable, weighted by the corresponding factor $h_i$ from the decomposition~\eqref{eq:Hamiltonian}.
Furthermore, we down-weight its importance by $N_i$ to turn the summation over the vectors into an empirical mean while not counting the extra zero entries.
We obtain $N$ vectors $\vec{v}_k$ with $M$ entries each.
Let $\hat{o}_k^{(j)}$ denote the $k$-th sample for the $j$-th target observable $O^{(j)}$ and $o^{(j)}$ again its mean value.
The definition, thus, reads as
\begin{equation}
\begin{aligned}
	\Vec{v}_k &= \big(v_{k,j} \big)_{j=1}^M
	\\
	\text{with}\ v_{k,j} &\coloneqq
	\frac{h_j}{N_j} \left( \hat{o}_k^{(j)} - o^{(j)} \right) \left[\mc{C}[O^{(j)},Q_k]\right]
\end{aligned}
\label{eq:definition_random_vectors}
\end{equation}
Here, $[\argdot]$ denotes the Iverson bracket that asserts to $1$ for a true argument and $0$ else.
The $\Vec{v}_k$ has zero mean by construction.
We bound its norm from above by the fact that spec$(O^{(j)}) \subseteq [-1,1]\ \forall j$ as we deal with tensor products of Pauli observables.
The Iverson bracket can also be dropped because it can only decrease the actual value of the norm.
We arrive at
\begin{equation}
\begin{aligned}
	\lpnorm[1]{\Vec{v}_k}
	&=
	\sum_{j=1}^M \frac{\abs{h_j}}{N_j}\ \abs*{ \hat{o}_k^{(j)} - o^{(j)} } \left[\mc{C}[O^{(j)},Q_k]\right]
	\\
	&\leq
	2 \sum_{j=1}^M \frac{\abs{h_j}}{N_j} = 2 \lpnorm[1]{\Vec{h''}} \eqqcolon b
\end{aligned}
\label{eq:single_norm_bound}
\end{equation}
with $h''_j = h_j/N_j$.
A similar trick is done for the expectation value of its square, i.e.,
\begin{equation}
\begin{aligned}
	\EE[\lpnorm[1]{\Vec{v}_k}^2]
	&\leq
	\EE\left[ \( 2 \sum_{j=1}^M \frac{\abs{h_j}}{N_j}\ \left[\mc{C}[O^{(j)},Q_k]\right] \ \)^2 \right]
	\\
	&= 4 \( \sum_{j=1}^M \frac{\abs{h_j}}{N_j}\ \left[\mc{C}[O^{(j)},Q_k]\right] \ \)^2,
\end{aligned}
\label{eq:1_norm_expected_variance_bound}
\end{equation}
in order to drop the expectation altogether.
With this, we proceed by calculating a bound $\sigma^2$ on the expected sample variance as
\begin{equation}
\begin{aligned}
	&\phantom{=}
	\sum_{k=1}^{N} \expectation \lpnorm[1]{\Vec{v}_k}^2
	\\
	&\leq
	4 \sum_{i,j=1}^M \frac{\abs{h_i h_j}}{N_i N_j} \sum_{k=1}^{N} \left[\mc{C}[O^{(i)},Q_k]\right]\left[\mc{C}[O^{(j)},Q_k]\right]
	\\
	&\leq
	4 \sum_{i,j=1}^M \frac{\abs{h_i h_j}}{N_i N_j} \min(N_i,N_j)
	\\
	&\leq
	4 \sum_{i,j=1}^M \frac{\abs{h_i h_j}}{N_i N_j} \sqrt{N_i N_j} = 4 \sum_{i,j=1}^M \frac{\abs{h_i h_j}}{\sqrt{N_i N_j}}
	\\
	&=
	4\lpnorm[1]{\Vec{h'}}^2 \eqqcolon \sigma^2 \, ,
\end{aligned}
\label{eq:expected_sample_norm_variance}
\end{equation}
where we have first used \Scref{eq:1_norm_expected_variance_bound}, then summed over $k$ by using the definition of $N_{i/j}$.
The second inequality arises from the fact that $\min(a,b) \leq \sqrt{ab}$ for any non-negative real numbers $a,b$.
Lastly, we have defined $(h')_j = h_j/\sqrt{N_j}$.
The notation has been chosen such that each apostrophe to $\vec{h}$ indicates element-wise division by $(\sqrt{N_i})_i$.

For the sake of readability and clarity, we introduce
\begin{equation}
	\mathrm{MEAN}(O)_{\vec{S}} \coloneqq \frac{1}{\abs{\vec{S}}} \sum_{i=1}^{\abs{\vec{S}}} \hat{o}_i\,,
	\label{eq:sample_mean}
\end{equation}
where the sum runs over the outcomes of the measurement settings $\vec{S}$ (with $S_i \in \vec{Q}$) that are compatible with $O$.
This way, we reformulate the mean sample vector as
\begin{equation}
\begin{aligned}
	\sum_{k=1}^{N} \vec{v}_k &= \( \sum_{k=1}^{N_j} \frac{h_j}{N_j} \( \hat{o}_k^{(j)} - o^{(j)} \) \)_j \\
	&= \Big( h_j \left[ \mathrm{MEAN}_{\vec{S}_j}(O^{(j)}) - o^{(j)} \right] \Big)_j\,.
\end{aligned}
\label{eq:1_norm_rewrite_sum}
\end{equation}
This relates to the absolute energy estimation error as
\begin{equation}
\begin{aligned}
	\abs{\hat E - E}
	&=
	\left| \sum_{j=1}^M h_j \left[\mathrm{MEAN}_{\vec{S}_j}(O^{(j)}) - o^{(j)}\right] \right|
	\\
	&\leq \sum_{j=1}^M \left| h_j \left[\mathrm{MEAN}_{\vec{S}_j}(O^{(j)}) - o^{(j)}\right] \right|
	\\
	&=
	\lpnormB[1]{\sum_k \vec{v}_k}.
\end{aligned}
\label{eq:1_norm_rewrite_norm}
\end{equation}
The first inequality step is the general triangle inequality for real numbers and the last step follows by evaluating the $1$-norm of \Scref{eq:1_norm_rewrite_sum}.
Putting \Scref{eq:definition_random_vectors,eq:single_norm_bound,eq:expected_sample_norm_variance} into \cref{corollary:Bernstein_vector} and, furthermore, using \Scref{eq:1_norm_rewrite_norm}, we have proven
\cref{eq:vector_bernstein}.

Lastly, we deduce \cref{eq:Bernstein_epsilon}.
Let $\delta\in (0,1/2)$.
We set the right-hand side of the latter equal to the given $\delta$.
Solving for $\epsilon$ yields
\begin{equation}
\begin{aligned}
	\epsilon &= \alpha_\delta \lpnorm[1]{\vec{h'}} \\
	\alpha_\delta
	&\coloneqq
	4 \sqrt{\log (1/\delta)} + 2\,.
\end{aligned}
\label{eq:Bernstein_epsilon_refined}
\end{equation}
Using the observation (the claim can be, e.g., derived by taking the derivative on both sides) that
\begin{equation}
	\frac{3}{2} x^2 \geq x+\frac12\quad\forall x \geq 1\,,
\end{equation}
we have that $\alpha_\delta \leq 6 \log(1/\delta)$, hence \cref{eq:Bernstein_epsilon}.
\end{proof}
%
\section{The truncation criterion}
\label{appendix:truncation_criterion}
%
The guarantee based on \cref{theorem:Bernstein} quantifies the total statistical noise of measuring the respective Pauli terms in \cref{eq:Hamiltonian}.
In order to decrease the measurement effort, a popular approach has been to truncate the Hamiltonian decomposition, i.e., to remove terms of small coefficient magnitude altogether~\cite{McCRomBab16}.
This procedure introduces a systematic error which can be compared to the statistical error of the untruncated decomposition.
Due to the form of \cref{eq:Bernstein_epsilon}, the statistical errors of the individual terms add up independently of each other.
As a result, this allows us to compare the statistical error contribution of each observable with its systematic error upon truncation.
In essence, given a list of measurement settings $\vec{Q}$, one has to update the empirical estimate $\hat{o}^{(i)}$ for each observable in the decomposition as
\begin{equation}
\begin{aligned}
	\hat{o}^{(i)}
	&\gets 
	\begin{cases}
		0 & \text{if}\ N_i < \alpha_\delta^2 \, ,\\
		\hat{o}^{(i)} & \text{if}\ N_i \geq \alpha_\delta^2 \, ,
	\end{cases}
\end{aligned}
\label{eq:truncation_criterion}
\end{equation}
with $\alpha_\delta$ in \Scref{eq:Bernstein_epsilon_refined}.

This truncation criterion~\eqref{eq:truncation_criterion} ensures the optimal trade-off between the statistical error in \cref{eq:vector_bernstein} and the systematic one due to truncation.
This is formalized in the following.
\begin{corollary}
\label{corollary:truncation_criterion}
Consider the setting of \cref{theorem:Bernstein} and let $\delta \in (0,1/2)$.
Then, with probability at least $1-\delta$, the application of the truncation criterion~\eqref{eq:truncation_criterion} leads to a provably higher precision.
\end{corollary}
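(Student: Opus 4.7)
The plan is to compare the provable error bound of the truncated estimator with that of the untruncated one and show the former is no larger. Partition the indices according to the truncation criterion as $S\coloneqq\{i\in[M] : N_i\geq \alpha_\delta^2\}$ and its complement $S^{c}$. After applying \cref{eq:truncation_criterion}, the estimator reads $\hat E_{\mathrm{trunc}}=\sum_{i\in S}h_i\hat o^{(i)}$. By the triangle inequality,
\begin{equation*}
\bigl|\hat E_{\mathrm{trunc}}-E\bigr|
\leq
\Bigl|\sum_{i\in S}h_i(\hat o^{(i)}-o^{(i)})\Bigr|
+\Bigl|\sum_{i\in S^{c}}h_i\, o^{(i)}\Bigr|,
\end{equation*}
splitting the deviation into a statistical contribution (on $S$, where enough samples are available to invoke \cref{theorem:Bernstein}) and a systematic contribution (the truncation bias on $S^{c}$).

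For the statistical contribution, I would apply \cref{theorem:Bernstein} to the restricted Hamiltonian $H_S=\sum_{i\in S}h_iO^{(i)}$, which is legitimate because $N_i\geq \alpha_\delta^2\geq 1$ for all $i\in S$. Solving \cref{eq:vector_bernstein} for $\epsilon$ at confidence $\delta$ gives the exact bound $\epsilon=\alpha_\delta\lpnorm[1]{\vec h'_S}$, so with probability at least $1-\delta$ the statistical term is at most $\alpha_\delta\sum_{i\in S}|h_i|/\sqrt{N_i}$. For the systematic contribution, each $O^{(i)}$ is a Pauli string, so $|o^{(i)}|\leq 1$ and the bias is bounded by $\sum_{i\in S^{c}}|h_i|$. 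Combining, the truncated estimator obeys
\begin{equation*}
\bigl|\hat E_{\mathrm{trunc}}-E\bigr|
\leq
\alpha_\delta\sum_{i\in S}\frac{|h_i|}{\sqrt{N_i}}+\sum_{i\in S^{c}}|h_i|
\end{equation*}
with probability at least $1-\delta$.

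To conclude, compare against the untruncated bound $\alpha_\delta\sum_{i\in[M]}|h_i|/\sqrt{N_i}$ obtained from \cref{theorem:Bernstein} applied to the full Hamiltonian at the same confidence $\delta$. Their difference is
\begin{equation*}
\sum_{i\in S^{c}}|h_i|\Bigl(1-\tfrac{\alpha_\delta}{\sqrt{N_i}}\Bigr)\leq 0,
\end{equation*}
since by construction $\sqrt{N_i}<\alpha_\delta$ for every $i\in S^{c}$. Hence the truncated bound never exceeds the untruncated one, and is strictly smaller whenever some $i\in S^{c}$ carries a nonzero weight $h_i$, giving the claimed increase in provable precision.

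There is no real technical obstacle here; the substance of the argument lies in choosing $\alpha_\delta$ correctly. The threshold $N_i<\alpha_\delta^2$ is precisely the break-even point at which dropping an observable (and paying the bias $|h_i|$) becomes cheaper in the provable bound than keeping it (and paying the statistical contribution $\alpha_\delta|h_i|/\sqrt{N_i}$). Thus the only care needed is to invert \cref{eq:vector_bernstein} exactly rather than through the loose simplification in \cref{eq:Bernstein_epsilon}, and to verify that the same confidence level $\delta$ is used in both the truncated and untruncated invocations of \cref{theorem:Bernstein} so that the comparison is valid.
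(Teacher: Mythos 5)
Your argument is correct and follows essentially the same route as the paper: both compare, term by term, the statistical contribution $\alpha_\delta\abs{h_i}/\sqrt{N_i}$ of keeping an observable against the systematic bias $\abs{h_i}$ of dropping it, with the break-even point $N_i=\alpha_\delta^2$ defining the criterion. Your write-up is if anything slightly more explicit than the paper's in isolating the difference $\sum_{i\in S^c}\abs{h_i}(1-\alpha_\delta/\sqrt{N_i})\leq 0$ and in noting that the exact $\alpha_\delta$ (rather than the loosened $6\log(1/\delta)$) must be used for the comparison.
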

\begin{figure}[b]
	\centering
	\includegraphics[width=0.45\textwidth]{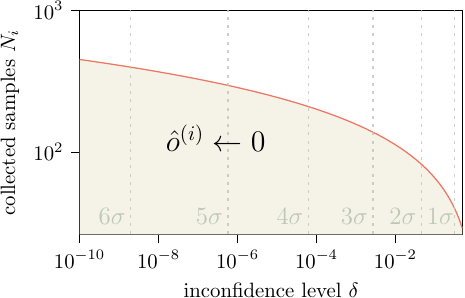}
	\caption{%
	Illustration of the truncation criterion~\eqref{eq:truncation_criterion}.
	We plot $\alpha_\delta^2$ as a function of the selected inconfidence level $\delta$ (orange line).
	The criterion tells us that it is better to go with a truncation if $N_i \leq \alpha_\delta^2$ (shaded area).
	For illustrative purposes, we have added several $\sigma$-regions to the confidence levels.
	They suggest that reaching confidence levels that make significant deviations virtually impossible does not require an infeasible number of measurements per observable.
	}
	\label{fig:truncation_criterion}	
\end{figure}
Importantly, the criterion does not depend on the magnitude of the coefficients $\abs{h_i}$.
We visualize it in \cref{fig:truncation_criterion} to illustrate that a feasible number of compatible settings is required even for $\delta \ll 1$.
\begin{proof}[Proof of \cref{corollary:truncation_criterion}]
By virtue of \cref{theorem:Bernstein}, we compare the statistical error with the systematic error that is introduced by \Scref{eq:truncation_criterion}.
Assume we leave out the $i$-th term in the grouped empirical mean estimator~\eqref{eq:energy_estimator} by setting $\hat{o}^{(i)} = 0$.
This introduces a symmetric systematic error of at most $\epsilon_\mathrm{sys}^{(i)} = \abs{h_i}$, see \cref{eq:systematic_error_def}.
Let $I_\mathrm{sys}$ be the index set of omitted terms and $I_\mathrm{stat}$ the index set of measured observables.
Both sets are disjoint and their union yields the index set $[M]$.
We write, using first the triangle inequality and then \Scref{eq:Bernstein_epsilon_refined},
\begin{equation}
\begin{aligned}
	\abs{E-\hat{E}}
	&\leq
	\sum_{i=1}^M \Big| h_i \( \hat{o}^{(i)} - o^{(i)} \) \Big|
	\\
	&\leq
	\sum_{i\in I_\mathrm{stat}} \frac{\alpha_\delta}{\sqrt{N_i}} \abs{h_i} + \sum_{i\in I_\mathrm{sys}} \abs{h_i}
	\\
	&\coloneqq
	\epsilon_\mathrm{stat} + \epsilon_\mathrm{sys}
\end{aligned}
\label{eq:guaranteed_accuracy}
\end{equation}
with probability $1 - \delta$.
This readily provides us with a criterion whether the error for the $i$-th term should be estimated by statistical or systematic means, i.e., whether it should be measured or simply set to zero:
for each term in the decomposition, we simply inspect which summand of the sum yields the smaller value and include $i$.
Comparing the corresponding entries in the two summations above, we arrive at \cref{corollary:truncation_criterion}.
\end{proof}
%
\section{Proof of Proposition~\ref{theorem:hardness}}
\label{appendix:hardness}
%
We prove the proposition by first stating a more general problem that translates the optimization over a finite set (here, the measurement settings) to the minimization of a target function.
Subsequently, we show that this problem is \NP-complete in the number of qubits $n$ which even holds true for a single measurement setting.
This is done by relating the problem to the one of MIN-GROUPING, i.e., grouping a given collection of Pauli observables into the smallest number of partitions with overlap which is \NP-hard in the number of qubits~\cite{Wu2021OverlappedGroupingA} and to CLIQUE, i.e., finding the largest clique in a graph~\cite{karp1972reducibility}.
\Cref{theorem:hardness} then follows as a corollary.

To define the problem setting, we recall what makes a Pauli string compatible with a measurement setting.
The concept of compatibility, \cref{def:compatability}, is crucial for minimizing the right-hand side of \cref{eq:vector_bernstein}.
We define the minimization problem as follows. 
\begin{problem}[BEST-ALLOCATION-SET]
\label{def:best_allocation_set}
Fix a compatibility indicator, either $g = \mc{C}$ or $g = \mc{C}_\mathrm{QWC}$ and a family of convex and strictly monotonously decreasing functions $f_\alpha: \RR_+ \rightarrow \RR_{+}$ with $\lim_{x\to\infty} f_\alpha(x) = 0$, parametrized by $\alpha > 0$ such that $f_\alpha(0) > 0$ for all $\alpha>0$ and $f_\alpha(x) < f_\beta(x)$ for all $x>0$ and $\alpha < \beta$ which can be evaluated in polynomial time.
Assume that $\alpha$ controls the curvature of $f_\alpha$, i.e., $f_\alpha(1) = L_\alpha f_\alpha(0)$ with constant $L_\alpha < 1$ that can be made arbitrarily small by increasing either $\alpha$ or $1/\alpha$.

Input:
\begin{compactenum}
\item A set of weighted Pauli strings $\mc{O} = \Set{(O^{(i)},h_i)}_{i\in [M]}$ (comprising a Hamiltonian \eqref{eq:Hamiltonian})
\item Measurement budget $N$
\end{compactenum}

Output:
\begin{equation}
	\mc{Q}_\mathrm{opt} \coloneqq \argmin_{\mc{Q} \in (\mc{P}^n)^{N}} \sum_{i=1}^M f_{\abs{h_i}}(N_i(\mc{Q}))\,,
	\label{eq:best_allocation_set_target}
\end{equation}
where $N_i(\mc{Q}) = \sum_{j=1}^{N} g(O^{(i)},\mc{Q}_j)$. 
The corresponding decision version of the problem is to decide for a threshold value $t \in \RR$ as an additional input whether the optimal value is $\sum_{i=1}^M f_{\abs{h_i}}(N_i(\mc{Q_\mathrm{opt}})) \leq t$.
\end{problem}
The minimization of the right-hand side of \cref{eq:vector_bernstein} is one of the instantiations of \cref{def:best_allocation_set}.
The function family $f_\alpha$ is derived in more detail around \cref{eq:target_Bernstein_bound} -- in essence, we ensure that the right-hand side of \Scref{eq:best_allocation_set_target} is nothing but $\norm{\vec{h'}}_{\ell_1}$ from \cref{eq:vector_bernstein}.
However, this problem class potentially includes other tail bounds as long as they decrease strictly monotonously with each of the $N_i$ and are convex (see also \cref{tab:comparison_bounds} for another tail bound which is less tight). 
Now, we prove that BEST-ALLOCATION-SET is \NP-hard in the number of qubits $n$.
To this end, we find a polynomial time many-one reduction from the \NP-hard MIN-GROUPING~\cite{Wu2021OverlappedGroupingA} to BEST-ALLOCATION-SET in the following.
\begin{proposition}
\label{prop:hardness_allocation_set}
\Cref{def:best_allocation_set} is
\begin{compactenum}[a)]
	\item \NP-complete.
	\label{item:completeness}
	\item \NP-complete, even when restricted to $N = 1$.
	\label{item:completeness_restricted}
\end{compactenum}
\end{proposition}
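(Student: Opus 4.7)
The plan is to first verify membership in \NP{} and then establish \NP-hardness through two separate polynomial-time reductions, one from MIN-GROUPING for part (\ref{item:completeness}) and one from CLIQUE for part (\ref{item:completeness_restricted}). Membership is immediate: a candidate $\mathcal{Q}\in(\mc{P}^n)^N$ has bit-length polynomial in $n$ and $N$, the counts $N_i(\mathcal{Q})$ are computable in time $O(MNn)$ via the polynomial-time indicator $g$, and the sum $\sum_i f_{\abs{h_i}}(N_i(\mathcal{Q}))$ can be compared to $t$ in polynomial time by assumption. The substantive direction is hardness.

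For part (\ref{item:completeness}), given a decision instance $(\{O^{(i)}\}_{i\in[M]}, k)$ of MIN-GROUPING with respect to $g$ (known to be \NP-hard for $g=\mc{C}_\mathrm{QWC}$ by \cite{Wu2021OverlappedGroupingA} and analogously for $g=\mc{C}$), I would construct the BEST-ALLOCATION-SET instance with the same observables, uniform weight $h_i=h$, budget $N=k$, and threshold $t = Mf_h(1)$, choosing $h$ so that $L_h \leq 1/(2M)$. Such a choice is always possible by the hypothesis that $L_\alpha$ can be made arbitrarily small by varying $\alpha$. The core equivalence is that a partition of the observables into $k$ pairwise commuting groups yields $k$ Pauli-string settings $Q_j\in\mc{P}^n$ each compatible with every member of its group (for $g=\mc{C}_\mathrm{QWC}$, pick the unique non-identity single-qubit Pauli appearing on each qubit across the group, filling empty qubits arbitrarily; for $g=\mc{C}$, pick a full-support representative from the centralizer of the group), and conversely each such $Q_j$ induces a commuting class of observables compatible with it. Thus a yes-instance yields $N_i\geq 1$ for every $i$ and objective $\leq Mf_h(1)=t$, while a no-instance forces some $N_i=0$ and pushes the objective above $f_h(0) \geq 2Mf_h(1) = 2t$.

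For part (\ref{item:completeness_restricted}), with $N=1$ the objective collapses to $\abs{C(Q)}f_1(1)+(M-\abs{C(Q)})f_1(0)$, where $C(Q)=\{i : g(O^{(i)},Q)\}$, so its minimization amounts to maximizing $\abs{C(Q)}$. I would reduce from the decision version of CLIQUE. Given $(G=(V,E), k)$ with $\abs{V}=M$ and a fixed vertex ordering, introduce one qubit per non-edge $(u,v)\notin E$ with $u<v$ and on that qubit place factor $X$ in $O^{(u)}$, factor $Y$ in $O^{(v)}$, and $\mathds{1}$ in every other $O^{(w)}$. Two strings $O^{(u)},O^{(v)}$ then differ non-trivially on exactly the qubit of the pair $(u,v)$ precisely when that pair is a non-edge, producing a single qubit-wise anti-commutation which is at once a QWC- and a global anti-commutation. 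Hence a set of observables is pairwise $g$-compatible iff the underlying vertex set is a clique in $G$, and a single measurement setting $Q$ is compatible with at least $k$ observables iff $G$ has a clique of size $\geq k$. Taking uniform weights and threshold $t = kf_1(1)+(M-k)f_1(0)$ completes the reduction.

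The main obstacle is tailoring the threshold $t$ to the abstract structural hypotheses on $\{f_\alpha\}$. In part (\ref{item:completeness_restricted}) the strict inequality $f_1(1)<f_1(0)$ is already sufficient. Part (\ref{item:completeness}) is more delicate: the penalty of missing even one observable, $f_h(0)$, must dominate the combined penalty of the remaining $M-1$ observables at their ideal count, $(M-1)f_h(1)$, which is exactly where the curvature condition $L_\alpha\to 0$ is invoked through a sufficiently extreme choice of $h$. A secondary technical point is the witness construction for commuting groups in the general-commutativity case, which follows from the symplectic structure of Pauli anti-commutation. Once \cref{prop:hardness_allocation_set} is in hand, \cref{theorem:hardness} follows by identifying the minimization of the right-hand side of \cref{eq:vector_bernstein} with an instance of \cref{def:best_allocation_set} via the weight function derived in \cref{eq:target_Bernstein_bound}.
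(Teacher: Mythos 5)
Your proposal is correct and follows essentially the same route as the paper: \NP-membership by direct evaluation, part~a) by reduction from MIN-GROUPING with uniform weights chosen so that the curvature condition makes a single unmeasured observable ($N_i=0$) more costly than all $M$ observables at $N_i=1$ (the paper phrases this via the secant slopes $\Delta_\gamma(0)>M\Delta_\gamma(1)$, you via $L_h\le 1/(2M)$, which is the same mechanism), and part~b) by reduction from CLIQUE using the one-qubit-per-non-edge anticommutation gadget, which is exactly the construction of Ref.~\cite[Algorithm~2]{Wu2021OverlappedGroupingA} that the paper cites. The only differences are presentational: you spell out the group-to-setting witness and the CLIQUE gadget explicitly where the paper defers to the cited reference.
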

\begin{proof}
The decision version of BEST-ALLOCATION-SET is in \NP: given any $\mc{Q} \in (\mc{P}^n)^{N}$, we can efficiently calculate the argument of the right-hand side of \Scref{eq:best_allocation_set_target} (because the $f_{\abs{h_i}}$ can be evaluated in polynomial time) and compare it to the threshold value $t \in \RR$.

BEST-ALLOCATION-SET is also \NP-hard (\cref{prop:hardness_allocation_set}\ref{item:completeness}) since we can find a polynomial time many-one reduction from MIN-GROUPING: 
We are given a collection of Pauli observables $\Set{O^{(i)}}_{i\in [M]}$ and a threshold value $\eta \in \NN$ for the number of groups.
We now construct the corresponding weights by inspecting the functions $f_\alpha$.
The key idea is to choose the weights in such a way that the minimization of the target function~\eqref{eq:best_allocation_set_target} requires each observable to have at least one compatible measurement.
Let $\Delta_\alpha(x) = f_\alpha(x+1) - f_\alpha(x)$ be the slope of the secant between $x$ and $x+1$.
Because of convexity and monotony, we have that $\Delta_\alpha(0) / \Delta_\alpha(1) > 0$.
Moreover, we have that $\Delta_\alpha(1) \leq f_\alpha(1) = L_\alpha$ as $f_\alpha$ is non-negative.
Thus,
\begin{equation}
	\frac{\Delta_\alpha(0)}{\Delta_\alpha(1)} \geq \frac{\Delta_\alpha(0)}{L_\alpha f_\alpha(1)} = \frac{1 - L_\alpha}{L_\alpha} = \frac{1}{L_\alpha} - 1\,,
\end{equation}
and there exists a constant $\gamma$ such that $\Delta_\gamma(0) > M \Delta_\gamma(1)$.
We use this constant to provide the set of weighted Pauli strings $\mc{O} = \Set{(O^{(i)},\gamma)}_{i\in [M]}$.
Finally, set $N = \eta$.
Let $\mc{Q}^\ast = \{ Q_i \in \mc{P}^n \}_{i=1}^{N}$
be an optimal solution of BEST-ALLOCATION-SET. 
Dropping duplicates (with a worst-case time complexity of $\LandauO(N^2)$) provides measurement settings $\mc{Q}_\mathrm{filt.}\subset (\mc P^n)^{k}$ with $k\leq \eta$. 
Now we explain that this `filtered version' of $\mc{Q}$ exactly contains the optimal solution of MIN-GROUPING.
To see this, we go through each $Q \in \mc{Q}_\mathrm{filt.}$ and through each $(O,w) \in \mc{O}$ and append $O$ to the group belonging to $Q$ if $g(Q,O) = \mathrm{true}$.
Due to the choice of $\gamma$, we can obtain YES-instances of MIN-GROUPING of threshold $\eta$ from YES-instances of BEST-ALLOCATION-SET with threshold $Mf_\gamma(1)$.

Now, we fix $N=1$ beforehand and show \NP-hardness (\cref{prop:hardness_allocation_set}\ref{item:completeness_restricted}) by a reduction from CLIQUE with threshold $\nu \in \NN$, the size of the clique~\cite{karp1972reducibility}.
Given a graph $G(V,E)$, we employ the polynomial time reduction of Ref.~\cite[Algorithm 2]{Wu2021OverlappedGroupingA} in order to obtain $M = \abs{V}$ $n$-qubit observables where $n = M(M-1)/2 - \abs{E}$.
We turn these observables into a Hamiltonian with the same $\gamma$ as above.
Then, BEST-ALLOCATION-SET given a threshold of $(M-\nu)f_\gamma(0) + \nu f_\gamma(1)$ delivers the solution to CLIQUE.
\end{proof}

Finally, we show that \cref{theorem:hardness} is just an instantiation of \cref{def:best_allocation_set}.
We formalize and prove this in the following.
\begin{corollary}
\label{corollary:hardness_full}
Consider a Hamiltonian~\eqref{eq:Hamiltonian}, state $\rho$ and an estimator $\hat{E}$~\eqref{eq:energy_estimator} of $E = \Tr[\rho H]$ and a measurement budget $N \geq 1$.
Fix a compatibility indicator, either $g= \mc{C}$ or $g = \mc{C}_\mathrm{QWC}$. Choose an $\epsilon > 0$ and pick one of the tail bounds from \cref{tab:comparison_bounds} for $\abs{\hat{E}-E} \geq \epsilon$.
Then, finding the measurement settings $\mc{Q}_\mathrm{opt} \in (\mc{P}^n)^{N}$ that yields the smallest upper bound to $\PP[\abs{\hat{E} - E}\geq \epsilon]$ is
\begin{compactenum}[a)]
\item \NP-complete in the number of qubits $n$,
\item \NP-complete, even when $N = 1$.
\end{compactenum}
\end{corollary}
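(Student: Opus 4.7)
The plan is to read Corollary as an immediate instantiation of Proposition once one checks that both tail bounds listed in Table I fit the template of Problem BEST-ALLOCATION-SET. Membership in NP is straightforward in either case: given any candidate $\mc{Q}\in(\mc{P}^n)^N$ the counts $N_i(\mc{Q})$ and the corresponding bound are computable in time polynomial in $n,M,N$, and can be compared to the given threshold. The substantive work is the hardness direction, which I would obtain by exhibiting, for each tail bound, a function family $f_\alpha$ meeting the hypotheses of Problem BEST-ALLOCATION-SET and then invoking Proposition as a black box.

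For the derandomization bound I would take
\begin{equation}
    f_\alpha(x) \;\coloneqq\; 2\exp\!\left(-\frac{\epsilon^2 x}{2\alpha^2}\right),
\end{equation}
so that $\sum_i f_{|h_i|}(N_i)$ is literally the derandomization column of Table I. This $f_\alpha$ is strictly convex, strictly decreasing, vanishes at infinity, has $f_\alpha(0)=2>0$, and is strictly increasing in $\alpha$ for every fixed $x>0$. Furthermore $L_\alpha=f_\alpha(1)/f_\alpha(0)=\exp(-\epsilon^2/(2\alpha^2))$ can be driven arbitrarily close to $0$ as $1/\alpha\to\infty$. All hypotheses of Problem BEST-ALLOCATION-SET are thereby met, and Proposition yields NP-completeness both for general $N\geq 1$ and, via its second clause, for $N=1$.

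For the Bernstein bound of Theorem 1, the right-hand side of \cref{eq:vector_bernstein} is strictly monotone in $\|\vec{h'}\|_{\ell_1}$, so minimizing that bound over $\mc{Q}$ is equivalent to minimizing $\|\vec{h'}\|_{\ell_1}=\sum_i |h_i|/\sqrt{N_i}$. I would therefore identify $f_\alpha(x)\coloneqq \alpha/\sqrt{x}$ on $x\geq 1$, which is strictly convex, strictly decreasing, vanishes at infinity, and is strictly monotone in $\alpha$. The one delicate point — and the main obstacle of the argument — is the behaviour at $x=0$, where $\alpha/\sqrt{x}$ blows up and so falls outside the literal hypothesis $f_\alpha(0)>0$ of Problem BEST-ALLOCATION-SET. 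There are two equivalent fixes: either allow $f_\alpha(0)=+\infty$, in which case the gap condition $|\Delta_\alpha(0)|>M|\Delta_\alpha(1)|$ used in the proof of Proposition is trivially satisfied; or cap $f_\alpha(0)=C\alpha$ following \cref{subsec:Bernstein_weight_function} with $C\geq h_{\max}^2/h_{\min}^2$, which again guarantees the same slope gap while keeping the function finite, monotone and decreasing. Either choice preserves the reductions from MIN-GROUPING (for general $N$) and from CLIQUE (for $N=1$) in the proof of Proposition.

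Combining the two verifications above with NP-membership yields Corollary. The whole argument is essentially a dictionary between the tail bounds in Table I and the abstract framework of Problem BEST-ALLOCATION-SET; the only substantive check is the treatment of the singularity of the Bernstein bound at $N_i=0$, which both workarounds handle without any new combinatorial ideas.
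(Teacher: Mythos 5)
Your overall architecture is exactly the paper's: verify \NP-membership by polynomial-time evaluability, then instantiate \cref{def:best_allocation_set} and invoke \cref{prop:hardness_allocation_set} as a black box, once for each column of \cref{tab:comparison_bounds}. The derandomization instantiation is fine (the paper simply reparametrizes to $f_\alpha(x)=\exp(-x/\alpha)$, which is cosmetic). The gap is in the one place you yourself flag as delicate: the extension of $f_\alpha(x)=\alpha/\sqrt{x}$ to $x=0$. Neither of your two workarounds actually preserves the reductions inside the proof of \cref{prop:hardness_allocation_set}. Setting $f_\alpha(0)=+\infty$ breaks the CLIQUE reduction for $N=1$: the decision threshold there is $(M-\nu)f_\gamma(0)+\nu f_\gamma(1)$, which becomes infinite for $\nu<M$, so the decision problem degenerates to a trivial YES and no longer encodes CLIQUE (it also takes you outside the stated codomain $\RR_+$). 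Capping $f_\alpha(0)=C\alpha$ with $C\geq h_{\max}^2/h_{\min}^2$ fails for a different reason: in the reduction all weights are set to a common value $\gamma$, so your condition only forces $C\geq 1$, which permits $\Delta_\gamma(0)=f_\gamma(1)-f_\gamma(0)=\gamma(1-C)=0$ and destroys the slope gap entirely. Even for a larger fixed $C$, the ratio $\Delta_\alpha(0)/\Delta_\alpha(1)=(C-1)/(1-1/\sqrt{2})$ is \emph{independent of $\alpha$}, so the hypothesis of \cref{def:best_allocation_set} that $L_\alpha$ can be made arbitrarily small by tuning $\alpha$ fails, and the needed gap $\Delta_\gamma(0)>M\Delta_\gamma(1)$ cannot be achieved by choosing $\gamma$.

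The paper's resolution is to make $f_\alpha(0)$ grow \emph{superlinearly} in $\alpha$: it sets $f_\alpha(0)\coloneqq\alpha^2(1-1/\sqrt{2})+\alpha$ (interpolating on $[0,1]$ by a quadratic so that $f_\alpha$ stays convex, decreasing, and differentiable at $x=1$). Then $\Delta_\alpha(0)/\Delta_\alpha(1)=\alpha$, so choosing $\gamma>M$ restores the slope-gap condition and both the MIN-GROUPING and CLIQUE reductions go through. Your argument would be repaired by replacing either workaround with such an $\alpha$-dependent (or explicitly $M$-dependent, e.g.\ $C>1+M(1-1/\sqrt{2})$) value at zero; as written, the Bernstein half of the corollary is not established.
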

\begin{figure*}[t]
	\centering
	\include{figures/NH3_comparison_others}
	\caption{%
	Complementary plots to \cref{fig:benchmark} for the \ce{NH3} molecule.
	a and d: empirical accuracy, b and e: their fit results, and c and f: extrapolations to $\chemicalaccuracy = 1.6$ mHa.
	Error bars indicate the standard deviation of the \ac{RMSE}, the fit uncertainties and propagated ones~\eqref{eq:fit_func_uncertainty}, respectively.
	a-c: results for the \ac{BK}-encoded molecules; d-f: corresponding results for the Parity encoding.
	The most competitive empirical results typically reach a value for $c$ of around $0.5$.
	However, a look at the scaling of the corresponding guarantee (see \Scref{fig:benchmark_guarantee}) yields a lower value of around $1/3$.
	\label{fig:fit_extrapolations}	
	}
\end{figure*}
\begin{proof}
We now check that we fulfill the requirements for \cref{def:best_allocation_set}.
To this end, we check that the minimization of either of the two tail bounds of \cref{tab:comparison_bounds} can be cast as \Scref{eq:best_allocation_set_target}.
The energy estimation inconfidence bound~\eqref{eq:vector_bernstein} is minimized if and only if $\sum_i \abs{h_i}/\sqrt{N_i(\mc{Q})}$ is minimized, see \cref{eq:further_upper_bound}.
This implies to choose $f_\alpha(x) = \alpha/\sqrt{x}$ for $x\geq 1$, fulfilling the requirements.
As noted around \cref{eq:rectified_weights}, the case for $x=0$ is ill-defined.
Since $N_i(\mc{Q}) \in \Set{0,1,\dots, N}$, we can set $f_\alpha(0) \coloneqq \alpha^2(1-1/\sqrt{2}) + \alpha$ and interpolate between $0$ and $1$ with a second-degree polynomial (with coefficients being polynomials of $\alpha$ such that $f_\alpha$ is differentiable at $x=1$.
One can check that $\Delta_\alpha(0) \geq \alpha \Delta_\alpha(1)$, i.e., it is easy to select $\gamma > M$ to finish the claim for this tail bound.

For the other tail bound, see e.g.\ \cref{eq:derandom_bound_improved}, we can readily select $f_\alpha(x) = \exp(-x/\alpha)$ which already fulfills all requirements from \cref{def:best_allocation_set}.
In this case, we have $\Delta_\alpha(0) \geq \exp(1/\alpha) \Delta_\alpha(1)$, i.e., selecting $\gamma < 1/\log(M)$ finishes the claim for the other tail bound.
\end{proof}
%
\section{Further benchmark plots}
\label{appendix:comparison_other_mappings}
In \cref{fig:benchmark} (main text), we have shown the empirical benchmark, as captured by the \ac{RMSE}~\eqref{eq:rmse}, exemplarily for the \ce{NH3}-molecule for the \acf{JW} encoding.
The same qualitative results also hold for the \ac{BK} and the Parity mapping as well which we show in \Scref{fig:fit_extrapolations} for completeness' sake.
We also append the results from the respective fitting routines for all considered molecules from the benchmark along with their extrapolation to reach $\chemicalaccuracy = 1.6$ mHa.
As fits, we again resorted to \cref{eq:fit_function} with respective uncertainty propagation from \cref{eq:fit_func_uncertainty}.
We again find ShadowGrouping improving upon the other methods when considering the largest problem instances (as quantified by the system size $n$).
Finally, we also add the fit parameter $c$ obtained for our tail-bound based guarantee~\eqref{eq:guaranteed_accuracy} as contrast to the empirical fit results.
We comment on this in the next section in more detail.
\begin{figure}
	\centering
	\include{figures/other_states_supplements}
	\caption{%
	Complementary plots to \cref{fig:other_states}.
	We subdivide the plot column-wise w.r.t.\ the \ac{BK} and the Parity encoding and row-wise w.r.t.\ to the accuracy of estimating the energy of either the thermal state~\eqref{eq:thermal_state} (a and b) or a depolarized Haar-random one~\eqref{eq:depolarized_state} (c and d).
	}
	\label{fig:other_states_supplements}
\end{figure}

For completeness' sake, we also provide results for states beyond the ground state (see Methods' subsection around \cref{fig:other_states}) for the \ac{BK} and the Parity mapping in \Scref{fig:other_states_supplements}.
%
%
\begin{figure*}[t]
	\centering
	\include{figures/NH3_comparison_guarantee}
	\caption{%
	Comparison of guarantees, \cref{theorem:Bernstein}, for the energy estimation task of the electronic structure problem for \ce{NH3}.
	a-c: \Scref{eq:Bernstein_epsilon_refined} as a function of the number of generated measurements $N$, analogous to \cref{fig:benchmark} for the three fermion-to-qubit mappings.
	We have chosen $1-\delta = 98\%$ as confidence and report the upper bounds over a hundred independent runs in units of Ha. 
	By construction, the guaranteed estimation accuracy is upper bounded by $\lpnorm[1]{\vec h}$.
	d-f: The guarantees follow a power law~\eqref{eq:fit_function} which allows to extrapolate to $\Nchemicalaccuracy$, i.e., to reach an accuracy below 1.6 mHa, for the various molecules.
	g-i: The corresponding fit exponents are shown in the bottom row.
	Error bars indicate fit uncertainties and propagated ones~\eqref{eq:fit_func_uncertainty}.
	The various methods of the benchmark are discussed around \cref{fig:benchmark}.
	}
	\label{fig:benchmark_guarantee}	
\end{figure*}
%
\section{Comparison of guarantees}
\label{appendix:guarantee_comparison}
%
In addition to the empirical analysis in the Results' section and the previous section of this \supplement, \cref{theorem:Bernstein} provides an upper bound to the actual accuracy for the energy estimation task given a measurement scheme that has generated a certain number of measurement settings $N$.
We combine this with the truncation criterion of \cref{corollary:truncation_criterion} to yield the smallest upper bound to $\abs{\hat E - E}$, cf. \Scref{eq:guaranteed_accuracy}.
We employ the same measurement schemes from the literature as done in the corresponding Results' subsection and show their respective guarantees in \Scref{fig:benchmark_guarantee}.
In the top row, we present the results akin to \cref{fig:benchmark}, i.e., we plot the (averaged) accuracies as a function of the number of generated measurement settings $N$.
Again, these plots follow a power law~\eqref{eq:fit_function} with fit exponents $c$ shown in the bottom row.
Here, ShadowGrouping yields the highest exponents of a value around $1/3$ which still is significantly below the ones of the empirical benchmark where $c \approx 0.5$.
We attribute this discrepancy to the fact that not all terms commute pairwise with each other.
Because \Scref{eq:guaranteed_accuracy} resembles a weighted power-mean w.r.t.\ the corresponding number of compatible measurement settings~\cite{Bullen03_means}, this effectively decreases the resulting exponent $c$.
In fact, if all terms were to commute pairwise, an exponent of $c=1/2$ is recovered, see the discussion around \cref{eq:single_shot_sample_complexity} in the corresponding Methods' subsection.
Nevertheless, given these smaller fit parameters, we again extrapolate when the guarantee falls below chemical accuracy and provide the corresponding $\Nchemicalaccuracy$ in the middle row.
Here, the accuracy of ShadowGrouping is rigorously guaranteed (e.g., without assumptions on variance estimates) and is the best one known today for the applied measurement schemes. 
However, the values attained for $\Nchemicalaccuracy$ are still far beyond feasibility, calling for a refinement of the tail bound in future works.
%
%
\end{document}